\theoremstyle{plain}
\newtheorem{theorem}{Theorem}
\newtheorem{lemma}{Lemma}
\theoremstyle{definition}
\newtheorem{definition}{Definition}
\newtheorem{example}{Example}
\theoremstyle{remark}
\newtheorem{remark}{Remark}
\newcommand{\scad}{sub-CAD}
\newcommand{\scads}{sub-CADs}
\newcommand{\ttiscad}{sub-TTICAD}
\newcommand{\Scad}{Sub-CAD}
\def\softO{\tilde O}
\newcommand{\llaymanscad}[1]{#1-layered variety sub-CAD}
\newcommand{\LayManScad}{Layered Variety Sub-CAD}
\newcommand{\lLayManScad}[1]{#1-Layered Variety Sub-CAD}
\newcommand{\LMscad}{LV-sub-CAD}
\newcommand{\lLMscad}[1]{#1-LV-sub-CAD}
\newcommand{\laymanttiscad}{layered variety sub-TTICAD}
\newcommand{\llaymanttiscad}[1]{#1-layered variety sub-TTICAD}
\newcommand{\LayManttiScad}{Layered Variety Sub-TTICAD}
\newcommand{\LMttiscad}{LV-sub-TTICAD}
\definecolor{orange}{rgb}{1,0.5,0}
\newcommand{\TODO}[1]{\colorbox{orange}{{\bf #1}}}
\author{D.~J.~Wilson, R.~J.~Bradford, J.~H.~Davenport and M.~England}
\title[Cylindrical Algebraic Sub-Decompositions]{Cylindrical Algebraic Sub-Decompositions}
\address{Department of Computer Science, University of Bath, Bath, BA2 7AY, England}
\email{\{D.J.Wilson, R.J.Bradford, J.H.Davenport, M.England\}@bath.ac.uk}
\begin{document}

\begin{abstract}
Cylindrical algebraic decompositions (CADs) are a key tool in real algebraic geometry, used primarily for eliminating quantifiers over the reals and studying semi-algebraic sets.  In this paper we introduce cylindrical algebraic sub-decompositions (sub-CADs), which are subsets of CADs containing all the information needed to specify a solution for a given problem. 

We define two new types of sub-CAD: variety sub-CADs which are those cells in a CAD lying on a designated variety; and layered sub-CADs which have only those cells of dimension higher than a specified value.  We present algorithms to produce these and describe how the two approaches may be combined with each other and the recent theory of truth-table invariant CAD.  

We give a complexity analysis showing that these techniques can offer substantial theoretical savings, which is supported by experimentation using an implementation in {\sc Maple}.
\end{abstract} 

\keywords{Cylindrical Algebraic Decomposition, Real Algebraic Geometry, Equational Constraints, Symbolic Computation,  Computer Algebra. 
\\ MSC Code: 68W30 (Symbolic Computation and Algebraic Computation)} 

\maketitle

\section{Introduction}

\subsection{Motivation}

A {\bf cylindrical algebraic decomposition} (CAD) is a decomposition of $\mathbb{R}^n$ into cells arranged cylindrically (meaning the projections of any pair of cells onto the first $k$ coordinates are either equal or disjoint) each of which is a semi-algebraic set (and so may be described by polynomial relations).  
They are traditionally produced sign-invariant with respect to a list of polynomials meaning each polynomial has constant sign on each cell.  
CAD was introduced by Collins in \cite{Collins1975}, and has become a key tool in real algebraic geometry for studying semi-algebraic sets and eliminating quantifiers over the reals.  Other applications include robot motion planning \cite{SS83II}, parametric optimisation \cite{FPM05}, epidemic modelling \cite{BENW06}, theorem proving \cite{Paulson2012} and programming with complex functions \cite{DBEW12}.

CAD usually produces far more information than required to solve the underlying problem. Often a problem will be represented by a formula and we thus require a CAD such that the formula has constant truth value on each cell.  This can be achieved by building a CAD sign-invariant for the polynomials in the formula, but that may introduce cell divisions not relevant to the formula itself.  
Many techniques have been developed to try and mitigate this, some of which we discuss later.  However, even then algorithms may produce thousands of superfluous cells which are not part of the solution set.  The key focus of this paper is the development of methods to return a subset of a CAD sufficient to solve a given problem.  We show that such subsets can often be identified from the structure of the problem, motivating the following new definitions.

\begin{definition}{\ }\\
\label{def:SubCAD}
Let $\mathcal{D}$ be a CAD of $\mathbb{R}^n$ (represented as a set of cells). Then a subset $\mathcal{E} \subseteq \mathcal{D}$ is a {\bf cylindrical algebraic sub-decomposition ({\scad})}. 

Let $F \subset \mathbb{Q}[x_1,\ldots,x_n]$. If $\mathcal{D}$ is a sign-invariant CAD for $F$ then $\mathcal{E}$ is a {\bf sign-invariant {\scad}}.  We define {\scad}s with other invariance properties in an analogous manner, such as \textbf{truth-invariance} for a Tarski formula $\varphi(x_1,\ldots,x_n)$.  If $\mathcal{D}$ is a truth-invariant CAD for $\varphi$ and $\mathcal{E}$ contains all cells of where $\varphi$ is satisfied then we say that $\mathcal{E}$ is a {\bf $\bm{\varphi}$-sufficient {\scad}}.

\end{definition}

As an example of when sub-CADs may be applicable, consider quantifier elimination, the original motivation for CAD.  Given a quantified formula $\varphi$ we want to derive an equivalent quantifier-free formula.  For a formula over the reals this is achieved by constructing a sign-invariant CAD for the polynomials in $\varphi$ and testing the truth of $\varphi$ at a sample point of each cell.  This is sufficient to draw a conclusion for the whole cell due to sign-invariance and thus an equivalent quantifier free formula can be created from the algebraic description of the cells on which $\varphi$ is true.  Such an application makes no use of the cells on which $\varphi$ is false and so a ${\varphi}$-sufficient {\scad} is appropriate.  

Of course, for a given problem we would like the smallest possible $\varphi$-sufficient {\scad}.  It is not usually possible to pre-identify this, but we have developed techniques which restrict the output of the CAD algorithm to provide {\scad}s sufficient for certain general classes of problems.  These will offer savings on any subsequent computations on the cells (such as evaluating polynomials or formulae) and in some cases also offer substantial savings in the CAD construction itself. 
We will introduce these techniques and demonstrate how they can be combined with each other and additional existing CAD theory, but first remind the reader of the necessary background theory.

\subsection{Background to CAD}

Collins' original algorithm is described in \cite{ACM84I}.  While there have been many improvements and refinements to this algorithm the structure has remained largely the same.  In the first phase, {\bf projection}, a {\bf projection operator} is repeatedly applied to a set of polynomials, each time producing another set in one fewer variables.  Together these sets contain the {\bf projection polynomials}.  These are then used in the second phase, {\bf lifting}, to build the CAD incrementally.  First $\mathbb{R}$ is decomposed into cells: points corresponding to the real roots of the univariate polynomials, and the open intervals defined by them.  Then $\mathbb{R}^2$ is decomposed by repeating this process over each cell using the bivariate polynomials (evaluated at a sample point).  The output for each cell consists of {\bf sections} (where a polynomial vanishes) and {\bf sectors} (the regions between). Together these form a {\bf stack} over the cell, and taking the union of these stacks gives the CAD of $\mathbb{R}^2$.  This is repeated until a CAD of $\mathbb{R}^n$ is produced.     
To conclude that the CAD is sign-invariant we need delineability.  A polynomial is {\bf delineable\/} over a cell if the portion of its zero set over that cell consists of disjoint sections.  Then a set of polynomials is {\bf delineable\/} over a cell if each is delineable and the sections of different polynomials over the cell are either identical or disjoint.  The projection operator used must ensure that over each cell of a sign-invariant CAD for the projection polynomials in $r$ variables, the polynomials in $r+1$ variables are delineable.

All cells include a cell index and a sample point.  The index is an $n$-tuple of positive integers that corresponds to the location of the cell relative to the rest of the CAD. Cells are numbered in each stack during the lifting stage (from most negative to most positive), with sectors having odd numbers and sections having even numbers.  Therefore the dimension of a given cell can be easily determined from its index: simply the number of odd indices in the $n$-tuple. Our algorithms in this paper will produce {\scads} that are {\bf index-consistent}, meaning a cell in a {\scad} will have the same index as it would in the full CAD.  Further, we will assume that cells are stored lexicographically by index.

Important developments to CAD include: refinements to the projection operator \cite{Hong1990, McCallum1998, Brown2001a}, reducing the number of projection polynomials and hence cells; 
partial CAD \cite{CH91}, where the structure of the input formula is used to simplify the lifting stage; the theories of equational constraints and truth-table invariance \cite{McCallum1999, BDEMW13} where the presence of equalities in the input further refines the projection operator; the use of certified numerics in the lifting phase \cite{Strzebonski06, IYAY09}; and CAD via triangular decomposition \cite{CMXY09} which constructs a decomposition of complex space and refines this to a CAD.

Constructing a CAD is doubly exponential in the number of variables \cite{DH88}.  While none of the improvements described above (or introduced in this paper) circumvent this they do make a great impact on the practicality of using CAD.  Note that CAD can depend heavily on the variable ordering used (from linear to doubly-exponential \cite{BD07}).  In this paper we work with polynomials in $\mathbb{Q}[\bm{x}]$ with the variables ${\bf x} = x_1,\ldots,x_n$ in ascending order (so we first project with respect to $x_n$ and continue until we reach  univariate polynomials in $x_1$).  The {\bf main variable} of a polynomial (${\rm mvar}$) is the greatest variable present with respect to the ordering.
Heuristics to assist with selecting a variable ordering (and other choices) were discussed in \cite{DSS04, BDEW13} and are equally applicable to {\scad}s.

\subsection{New Contributions}

In Section \ref{sec:scad} we present new algorithms to produce {\scads}, as well as surveying the literature to identify other examples of {\scads}.  To the best of our knowledge the concept of a sub-CAD has never been formalised and unified before. 

We start in Section \ref{sec:MCAD} by defining a {\bf Variety {\scad} (V-{\scad})}. This idea combines the ideas of: equational constraints \cite{McCallum1999}, where the presence of an equation implied by the input formula improves the projection operator; and partial CAD \cite{CH91}, where the logical structure of the input allows one to truncate the lifting process when the truth value can already be ascertained.  We observe that if the input formula contains an equational constraint then all valid cells must lie on the variety it defines and hence it is unnecessary to produce cells not on this variety.  

In Section \ref{sec:LCAD} we define a  {\bf Layered {\scad} (L-{\scad})} as the cells in a CAD of a specific dimension or higher.  It has been noted previously that a problem involving only strict inequalities would require only the cells in a CAD of full-dimension \cite{McCallum1993,Strzebonski00}.  We generalise this idea and explain when it may be of use, for example to solve problems whose solution sets are of known dimension or in applications like robot motion planning where only cells of certain dimensions are of use.

These new ideas improve the practicality of using CAD and their effect can be increased by combining them, as discussed in Section \ref{subsec:LMCAD}.
For example, consider formulae of the form $f = 0 \land \varphi$ where $\varphi$ involves only strict inequalities. Then a {\bf {\LayManScad} ({\LMscad})} can provide the cells of full dimension on the variety and thus the generic families of solutions.

These new ideas may also be combined with many existing aspects of CAD theory.  It is of course sensible to combine the restricted output of a variety CAD with the theory of reduced projection with respect to an equational constraint.  However, it is also possible to combine with the projection operator for {\bf truth-table invariant CAD (TTICAD)} recently presented in \cite{BDEMW13}.  A TTICAD is one for which each cell is truth invariant for a list of formulae, utilising equational constraints in the individual formulae to reduce the number of projection polynomials.  We discuss when and how truth-table-invariant {\scad}s can be produced in Section \ref{subsec:LMTTICAD}.
In Section \ref{SUBSEC:LMTTICASD} we examine a problem where a {\LMttiscad} can be used to identify almost all the solutions (the set of missing solutions has measure zero). This approach produces 88\% fewer cells than using TTICAD alone and takes seconds rather than minutes (while trying to tackle the problem with a traditional CAD is infeasible).  This and two other case studies demonstrating the benefit of the new algorithms are presented in Section \ref{sec:ExImp}.

In Section \ref{sec:Complexity} we give a complexity analysis of certain {\scads}.  
Although none of the new theory allows us to avoid the doubly exponential nature of CAD they do allow for improved asymptotic bounds.  The improvement is a drop in the constant term of the double exponent, and we note that $2^{2^{n}} \neq O(2^{2^{n-1}})$ so such savings can have a substantial effect.  This is reflected by experimental results in Section \ref{sec:ExImp} where substantial increases in efficiency due to sub-CAD technology are demonstrated.

\section{Sub-CADs}
\label{sec:scad}

We aim to return only those cells necessary to solve the problem at hand: a $\phi$-sufficient {\scad}.  However, trying to identify the minimal $\phi$-sufficient {\scad} for a problem would mean essentially solving the problem itself and so we instead explain how to identify sets of valid or invalid cells during the lifting stage at minimal cost.  Indeed, the two new approaches to {\scad} we present require only simple checks on cell-dimensions (easily obtained via the cell-index). 
We present theory and algorithms for variety and layered {\scads} in Sections \ref{sec:MCAD} and \ref{sec:LCAD}, and then in Section \ref{subsec:furtherscads} we put our work into context by surveying the CAD literature for relevance to {\scads}.  

\subsection{Variety {\scads}}
\label{sec:MCAD}

Recall the definition of an equational constraint.
\begin{definition}{\ }\\
\label{def:EC}
Let $\varphi$ be a Tarski formula. An {\bf equational constraint} is an equation, $f=0$, logically implied by $\varphi$.
\end{definition}
Equational constraints may be given explicitly (as in $f=0 \land \phi$), or implicitly (as $f_1f_2=0$ is in $(f_1 = 0 \land \phi_1) \lor (f_2 = 0 \land \phi_2)$).  The presence of an equational constraint can be utilised in the first projection stage by refining the projection operator \cite{McCallum1999} and also in the final lifting stage by reducing the amount of polynomials used to construct the stacks \cite{England13b}.  If more than one equational constraint is present then further savings may be possible \cite{McCallum2001, BM05}. We restrict ourselves to a single equational constraint, and if multiple equational constraints are present we assume that one has been designated.
\begin{definition}{\ }\\
\label{def:Vscad}
Let $\varphi$ be a Tarski formula with equational constraint $f=0$.  A truth-invariant {\scad} for $\varphi$ consisting only of cells lying in the variety defined by $f = 0$ is a {\bf Variety {\Scad} (V-{\scad})}.   
\end{definition}
Partial CAD \cite{CH91} describes how the logical structure of the input formula is used to truncate lifting when possible in CAD.  For example, if the truth of an expression on a cell $c$ can already be determined then there is no need to lift over it.  Algorithm \ref{alg:VarietySub-CAD} combines the ideas of utilising equational constraints and partial lifting to build variety {\scads} in the case where all factors of the equational constraint have the main variable of the system.

In Algorithm \ref{alg:VarietySub-CAD} $A$ is a square-free basis for the polynomials defining $\varphi$ and $E$ the subset of those defining $f$.  {\tt ProjOp} refers to an algorithm implementing a suitable CAD projection operator.  Then {\tt CADAlgo} and {\tt GenerateStack} respectively implement compatible algorithms for CAD construction, and stack generation over a cell with respect to the sign of given polynomials.  By compatible we mean using the same projection operator and checking for any necessary conditions of its use.  This is required as some CAD algorithms may return FAIL if the input does not satisfy certain conditions.  These are checked for during stack construction and usually referred to as the input being {\bf well-oriented} (see for example \cite{McCallum1998}).  In these cases Algorithm \ref{alg:VarietySub-CAD} must also return FAIL.
A sensible choice of projection operator is one which utilises the equational constraint to minimise the number of projection polynomials, such as $P_E(A)$ from \cite{McCallum1999}.  We verify the correctness of Algorithm \ref{alg:VarietySub-CAD} for this choice.

\begin{algorithm}[ht]

\SetKwInOut{Input}{Input}\SetKwInOut{Output}{Output}

\Input{A formula $\varphi$, a declared equational constraint $f=0$ from $\varphi$ and variables $\bm{x}=x_1, \dots, x_n$.  $\varphi$ is in $\bm{x}$ and all factors of $f$ have main variable $x_n$.
}
\Output{A (truth-invariant) variety {\scad} of $\mathbb{R}^n$ for $(\varphi, f)$, or FAIL.}
\BlankLine

Extract from $\varphi$ the set of polynomials $A$ and from $f$ the subset $E \subset A$\;

${\bf P} \leftarrow$ output from applying ${\tt ProjOp}$ to (A,E) once
\tcp*{First projection stage}

$\mathcal{D}' \leftarrow {\tt CADAlgo}({\bf P}, [x_1,\ldots,x_{n-1}])$
\tcp*{Computation of a CAD of $\mathbb{R}^{n-1}$}

\If{ $\mathcal{D}'$ = {\rm FAIL}}{
\Return {\rm FAIL}  \tcp*{${\bf P}$ is not well oriented}
}

$\mathcal{D} \leftarrow []$\;

\For{$c \in \mathcal{D}'$}{
  $S \leftarrow {\tt GenerateStack}(E, c)$\tcp*{Final lifting stage}
  \label{step:finallift}
  \If{ $S$ = {\rm FAIL}}{
  \Return {\rm FAIL}  \tcp*{Input is not well oriented}
  }
  \If{$|S| > 1$}{
    \For{$i=1 \dots (|S|-1)/2$}{
      $\mathcal{D}.{\tt append}(S[2i])$ \tcp*{Cells with even index included}
    }
  }
}

\Return $\mathcal{D}$\;

\caption{${\tt VarietySubCAD}(\varphi,f,{\bf x})$: Algorithm to produce variety {\scad}s.}
\label{alg:VarietySub-CAD}
\end{algorithm}

\begin{theorem}
\label{thm:MCAD}
When the sub-algorithms are chosen to implement McCallum's algorithm to produce CADs with respect to an equational constraint \cite{McCallum1999}, then Algorithm \ref{alg:VarietySub-CAD} satisfies its specification, with the outputted {\scad} consisting of cells on which the input formula has constant truth value.
\end{theorem}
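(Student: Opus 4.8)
The plan is to reduce the whole statement to McCallum's reduced projection theorem for an equational constraint \cite{McCallum1999} and then check the components of the specification in turn: termination, correct use of the FAIL flag, membership of every output cell in the variety $\{f=0\}$, constant truth value of $\varphi$ on each output cell, and the fact that the returned set embeds as a subset of a genuine truth-invariant CAD of $\mathbb{R}^n$. Termination is immediate, since both loops range over finite sets. The FAIL behaviour I would handle by noting that the only hypotheses McCallum's theorem places on the input are the well-orientedness conditions of \cite{McCallum1998}; these are precisely what {\tt CADAlgo} checks while producing $\mathcal{D}'$ and what {\tt GenerateStack} checks during the final lift. Hence whenever the algorithm does not return FAIL the hypotheses of the theorem hold, and the correctness argument below applies unconditionally.

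The core step is to apply McCallum's theorem to $\mathbf{P}=P_E(A)$. Because $\mathcal{D}'$ is the order-invariant CAD of $\mathbb{R}^{n-1}$ for $\mathbf{P}$ produced by the McCallum construction and the input is well-oriented, the theorem guarantees that over each cell $c\in\mathcal{D}'$ the polynomials in $E$ are delineable (so ${\tt GenerateStack}(E,c)$ really does produce a well-defined stack), and, more importantly, that on every section of $E$ over $c$ each polynomial of $A$ is sign-invariant. The hypothesis that all factors of $f$ have main variable $x_n$ is used here: it ensures that $E\subseteq A$ consists exactly of the $x_n$-polynomials defining the constraint, so the portion of $\{f=0\}$ in the top fibre over $c$ is captured precisely by the sections of $E$, with no part of the variety escaping into a lower projection.

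I would then read off the two substantive claims. For the variety property, the cells appended to $\mathcal{D}$ are exactly the even-indexed cells $S[2i]$ of each stack, i.e.\ its sections; since $E$ is a square-free basis for $f$, the union of the zero sets of the elements of $E$ is $\{f=0\}$, so a cell is a section of the stack if and only if $f$ vanishes on it, and hence every output cell lies on the variety. For truth-invariance, sign-invariance of $A$ on each such section implies sign-invariance of every polynomial occurring in $\varphi$ (each being a product of basis elements of $A$), so each atomic formula, and therefore $\varphi$ itself, has constant truth value on every cell of $\mathcal{D}$ — which is the assertion of the theorem.

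Finally I must verify that $\mathcal{D}$ is genuinely a sub-CAD, i.e.\ a subset of some CAD of $\mathbb{R}^n$ truth-invariant for $\varphi$. The natural parent is the CAD $\mathcal{D}^{\mathrm{full}}$ obtained by retaining \emph{all} cells (sections and sectors) of the stacks ${\tt GenerateStack}(E,c)$ over $c\in\mathcal{D}'$; this is a bona fide CAD, sign-invariant for $E$ by the delineability above, and $\mathcal{D}$ is obtained from it simply by discarding the odd-indexed cells, so $\mathcal{D}\subseteq\mathcal{D}^{\mathrm{full}}$ with identical indices (index-consistency). The key observation making $\mathcal{D}^{\mathrm{full}}$ truth-invariant for $\varphi$ is that on each discarded sector $f$ has constant nonzero sign; since $f=0$ is logically implied by $\varphi$, the formula is identically false there, so those cells too are truth-invariant even though $A$ may fail to be sign-invariant on them. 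Thus $\mathcal{D}^{\mathrm{full}}$ is a truth-invariant CAD for $\varphi$, $\mathcal{D}$ is a truth-invariant variety sub-CAD (indeed $\varphi$-sufficient, as $\varphi$ can hold only where $f=0$), and the specification is met. I expect the main obstacle to be the careful invocation of McCallum's theorem: confirming that the well-orientedness tests inside the two sub-algorithms align exactly with its hypotheses, and that the single top-level lift with respect to $E$ alone (rather than all of $A$) is justified by the sign-invariance-on-sections conclusion.
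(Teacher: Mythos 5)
Your proposal is correct and follows essentially the same route as the paper's proof: invoke McCallum's reduced-projection results for $P_E(A)$, observe that the even-indexed cells appended to the output are exactly the sections of $E$ and hence constitute the variety (nullification being excluded by the well-orientedness/FAIL mechanism), and deduce truth-invariance from sign-invariance of the polynomials of $A$ on those sections. The only difference is one of detail: where the paper cites Theorem 2.2 of \cite{McCallum1999} (as exploited in \cite{BDEMW14}) to justify performing the final lift with respect to $E$ alone, you reconstruct that justification explicitly via your $\mathcal{D}^{\mathrm{full}}$ argument, noting that $\varphi$ is identically false on the discarded sectors.
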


\begin{proof}
The algorithm in \cite{McCallum1999} applies a projection operator $P_E(A)$ and then incrementally constructs CADs of increasing real dimension, checking for well-orientedness when building each stack. In \cite{McCallum1999} the authors proved that the CAD returned was truth-invariant for the equational constraint and sign-invariant for any other polynomials involved on cells where the equational constraint was satisfied. 

The first difference in Algorithm \ref{alg:VarietySub-CAD} is in step \ref{step:finallift} where the final lift is performed with respect to $E$ rather than $A$.  In fact, this improvement follows directly from Theorem 2.2 in \cite{McCallum1999}, although it was not realised until \cite{BDEMW14}.  This reduces the size of the output, but not its invariance structure or correctness.

Next, in the final loop, only some of the cells generated in the final lift are included in the output.  The cells in question are a subset of what would have been produced otherwise and thus certainly a {\scad} with the same invariance property.  It remains to prove that they are a variety {\scad} (in which case we can conclude $\varphi$ has constant truth value on each cell).  

The selected cells are those with even index (the sections).  If the polynomial $f$ is not identically zero over a cell in $\mathcal{D}$ then these must together define its variety.  If any of the polynomials in $E$ were nullified then part of the variety may be in the sectors, but in this case the input would have failed the well-orientedness condition in \cite{McCallum1999} and thus Algorithm \ref{alg:VarietySub-CAD} would return FAIL.
\end{proof}

\begin{remark}
As the operator from \cite{McCallum1999} can return FAIL in situations where others do not, we now consider how Algorithm \ref{alg:VarietySub-CAD} may be adapted to use alternative CAD projection operators.

First, if a polynomial in $E$ is nullified on a cell of $\mathcal{D}$ then \cite{McCallum1999} returns FAIL while McCallum's operator to produce sign-invariant CADs in \cite{McCallum1998} is still applicable (because then the nullification is in the final lift where only sign-invariance and not order invariance is required).  However, we cannot simply apply Algorithm \ref{alg:VarietySub-CAD} with the alternative projection operator as it will now be the case that some of the variety is contained in the sectors over the cell in question.  In this case we would need the \texttt{GenerateStack} algorithm to check for nullification of $E$, and then if it occurs have Algorithm \ref{alg:VarietySub-CAD} include all cells from that stack in the output.  

Second, if some other polynomial is nullified causing failure then the original algorithm of Collins (or its improvement by Hong \cite{Hong1990}) is still applicable.  As with the previous case we must still check for nullification over a cell in the final lift, including the whole stack when nullification occurs.

In these cases there would still be output savings from building a variety {\scad}  since the inclusion of the full stack only needs to happen over those cells where nullification occurs.
\end{remark}

We demonstrate the savings in output size offered with a simple example.
\begin{example}{\ }\\
\label{ex:MCAD}
Assume variable ordering $x \prec y$ and define the polynomials 
\[
f := x^2 + y^2 - 1, \qquad g := x
\]
which are graphed with solid curves in each of the images in Figure \ref{fig:MCAD}.  This first of these images visualises the simplest sign-invariant CAD for the polynomials, which has 23 cells (each indicated by a solid box).  If the box is at the intersection of two curves (including the dotted lines) then the cell it represents is a point.  Otherwise, if the box is on a curve then the cell represented is that portion of the curve and if the box is not on a curve then the cell represented is that portion of the plane.

\begin{figure}[ht]
\begin{center}
\includegraphics[width=0.45\textwidth]{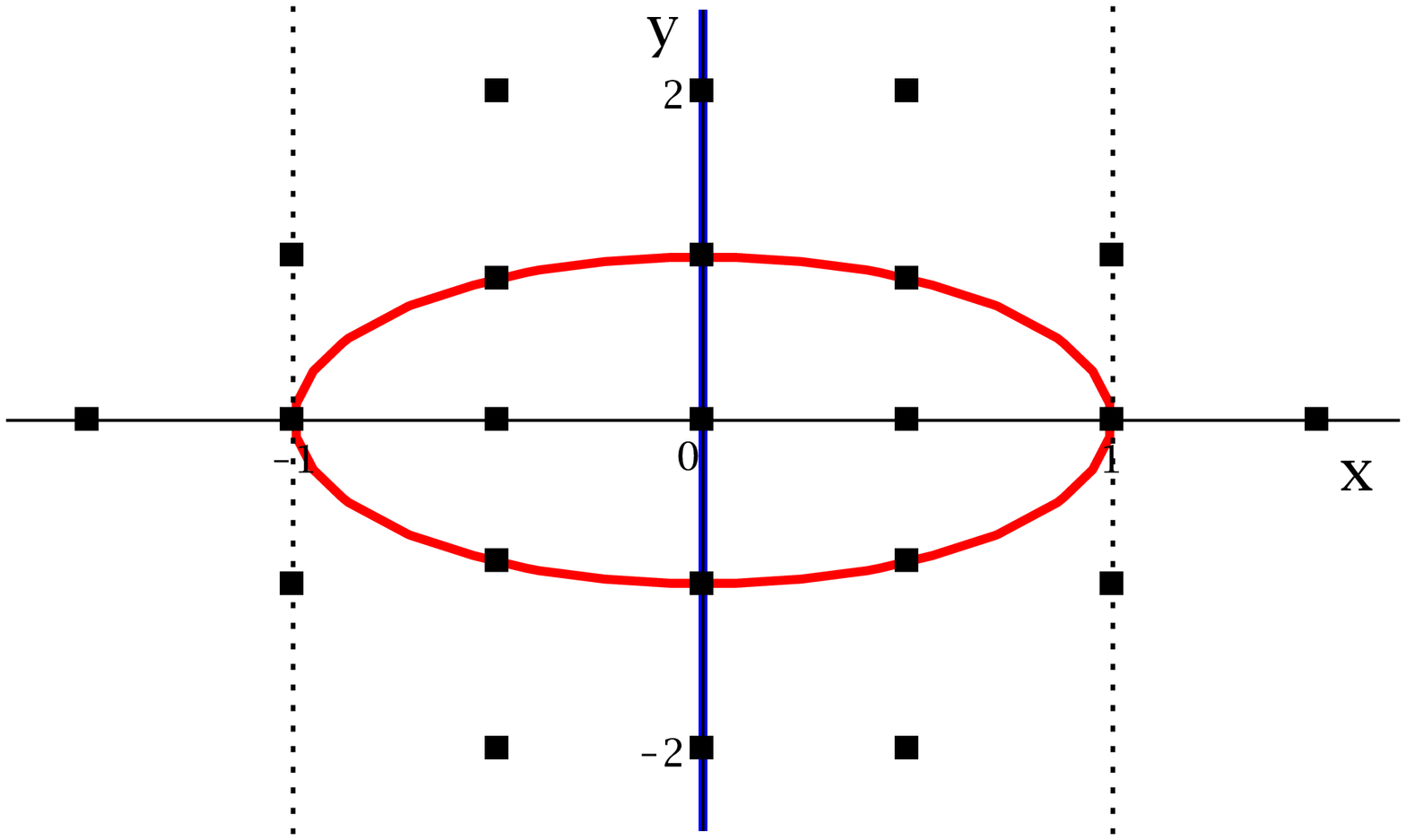}
\includegraphics[width=0.45\textwidth]{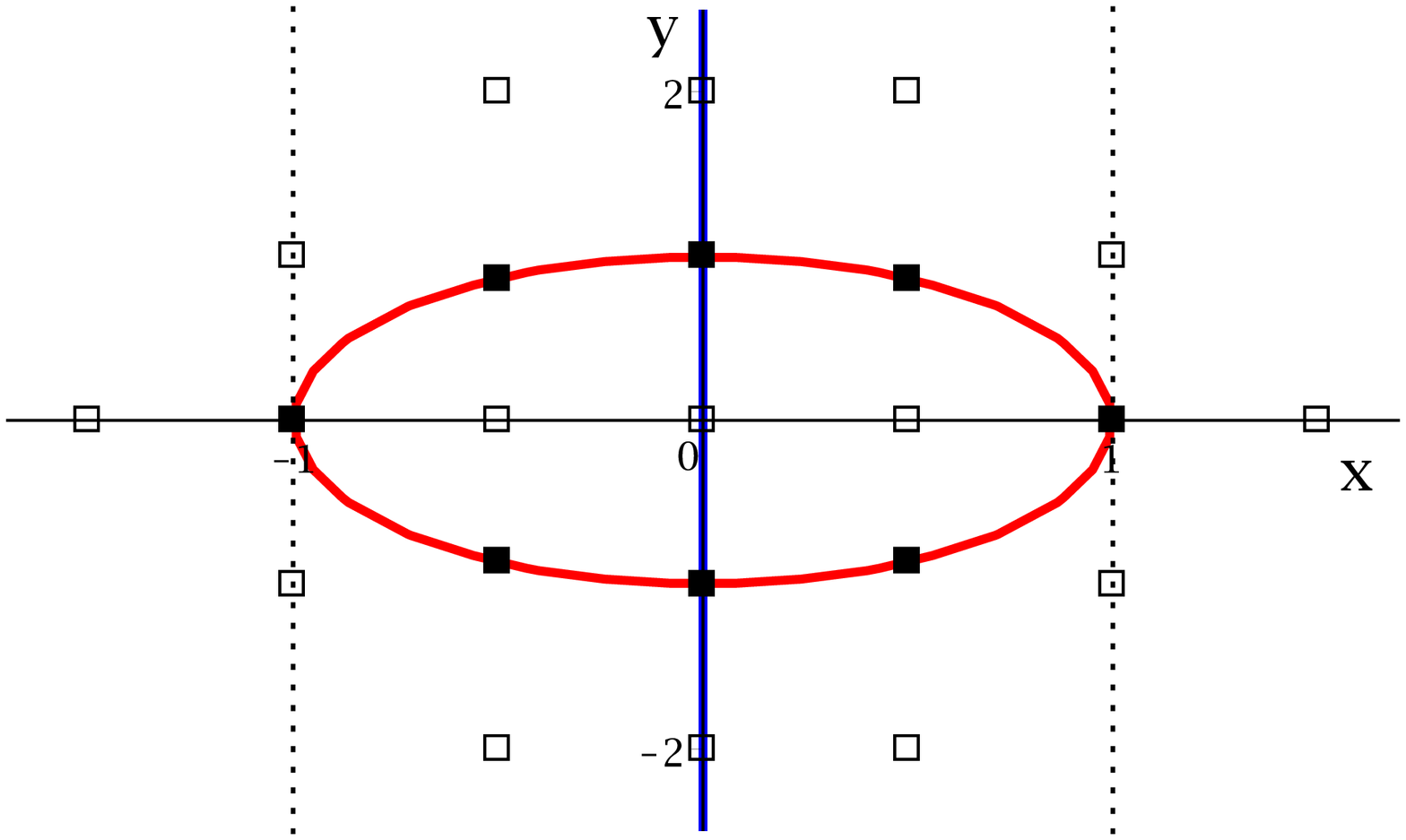}
\end{center}
\caption{Images representing the CADs described in Example \ref{ex:MCAD}.  The solid boxes represent cells in $\mathbb{R}^2$, (with the empty boxes cells that were constructed but discarded).}
\label{fig:MCAD}
\end{figure}

Suppose these polynomials originated from a problem involving the formula $\varphi_1 := f=0 \land g<0$.  Then 3 of the cells describe the solution (those on the circle to the left of the $y$-axis).  Using Algorithm \ref{alg:VarietySub-CAD} a V-{\scad} would be returned with all 8 of the cells on the circle.  The other cells have been recognised as not being on the variety to which all solutions belong and hence have been discarded.  The  image on the right in Figure \ref{fig:MCAD} identifies the cells returned in the {\scad} with solid boxes.  
\end{example}

The cell counts in this example were modest, but we will see later in Section \ref{sec:ExImp} that for more complicated examples the savings offered by using a V-{\scad} can be substantial.  Using Algorithm \ref{alg:VarietySub-CAD} clearly reduces the output size of CAD but it will save little CAD computation time since most of the work required to define the discarded cells (such as root isolation) had to be performed. These computed but discarded cells are shown as empty boxes in Figure \ref{fig:MCAD}.  
There would, however, be savings in computation time for any further work, such as only having to evaluate polynomials on 8 cells instead of 23 in order to describe the solutions.

If a formula has an equational constraint with factors not in the main variable of the system (parts of the corresponding variety have lower dimension) then building a V-{\scad} would mean even more potential cell savings, but also savings in computation time since some stacks would never be built at all.  
Suppose each factor of the equational constraint has main variable $x_k$ where $k<n$.  To adapt Algorithm \ref{alg:VarietySub-CAD} we must perform the restriction when lifting to a CAD of $\mathbb{R}^k$, instead of $\mathbb{R}^n$.  Thus we would first build a CAD of $\mathbb{R}_{k-1}$, then perform the restricted lifting to a {\scad} of $\mathbb{R}^k$, and continue lifting to a {\scad} of $\mathbb{R}^n$.  However, verifying this approach is a little more subtle.  We cannot follow Theorem \ref{thm:MCAD} and use McCallum's reduced projection at the start (as the set $E$ is empty \cite{McCallum1999}) but if we  were to use it to build a {\scad} of $\mathbb{R}^k$ then we would need to take care in how we lift over it (ensuring the projection polynomials above are delineable).  We see two possibilities:
\begin{enumerate}[(a)]

\item Use the tools of McCallum's sign-invariant algorithm from \cite{McCallum1998} throughout.  In particular, we must perform the restricted lifting with respect to the full set of projection polynomials of main variable $x_k$ rather than just those defining the equational constraint.  This is because to continue lifting with respect to projection polynomials provided by McCallum's operators we need to conclude that the {\scad} of $\mathbb{R}^k$ is order invariant on the cells, rather than just sign-invariant.  Thus the outputted {\scad} is not a variety {\scad} but a superset of cells containing one.  Algorithm \ref{alg:V2} demonstrates this approach.

\item Use Collins-Hong projection \cite{Hong1990} for the first $(n-k)$ projection stages.  Then apply Algorithm \ref{alg:VarietySub-CAD} with McCallum's reduced projection operator for equational constraints to build a variety {\scad} of $\mathbb{R}^k$ (as verified by Theorem \ref{thm:MCAD}) before continuing lifting to a variety {\scad} of $\mathbb{R}^n$.
\end{enumerate}
The latter approach is still a variety {\scad} and allows for a smaller CAD of $\mathbb{R}^k$ but these benefits may be overshadowed in the final {\scad} of $\mathbb{R}^n$ due to lifting with respect to a larger set of polynomial in the later stages.  Of course both cases may return FAIL in which case having all sub-algorithms implement \cite{Hong1990} would be the only way forward.  However, it is worth noting that when the equational constraint is not in the main variable we may actually avoid unnecessary failure:  if the input is not well-oriented but the problematic nullification only occurs on cells that are not on the variety then the outputted {\scad} will still be valid. It is interesting to note that this {\scad} is now a subset of a CAD we do not know how to produce algorithmically.  

\begin{algorithm}[ht]

\SetKwInOut{Input}{Input}\SetKwInOut{Output}{Output}

\Input{A formula $\varphi$, a declared equational constraint $f=0$ from $\varphi$ and variables $\bm{x} = x_1, \dots, x_n$.  $\varphi$ is in $\bm{x}$ and all factors of $f$ have the same main variable.
}
\Output{A {\scad} $\mathcal{D}$ on which $\varphi$ is truth invariant and which is the superset of a variety {\scad} for $(\varphi, f)$, or FAIL.}
\BlankLine

Extract from $\varphi$ the set of polynomials $A$\;

Set $k$ to be the index of mvar$(f)$ \;

Perform the first $n-k$ projection stages using ${\tt ProjOp}$ and starting with $A$.  

Set $\bm{P_1}$ to be the projection polynomials with main variable $x_i$

\If{$k\neq1$}{
$\mathcal{D}_{k-1} \leftarrow {\tt CADAlgo}({\bf P_{k-1}}, [x_1,\ldots,x_{k-1}])$
\tcp*{Computation of a CAD of $\mathbb{R}^{k-1}$}
}

\If{ $\mathcal{D}_{k-1}$ = {\rm FAIL}}{
\Return {\rm FAIL}  \tcp*{${\bf P_{k-1}}$ is not well oriented}
}

$\mathcal{D}_k \leftarrow []$
\label{step:RL1}\;

\eIf{$k=1$}{
Set $S$ to be the CAD formed by decomposing $\mathbb{R}$ according to the roots of $\bm{P_1}$.\;
  \If{$|S| > 1$}{
    \For{$i=1 \dots (|S|-1)/2$}{
      $\mathcal{D}_k.{\tt append}(S[2i])$ \tcp*{Cells with even index included}
    }
}
}{
\For{$c \in \mathcal{D}_{k-1}$}{
  $S \leftarrow {\tt GenerateStack}(\bm{P_k}, c)$\tcp*{$k$th lifting stage}
  \If{ $S$ = {\rm FAIL}}{
  \Return {\rm FAIL}  \tcp*{${\bf P_{k}}$ is not well oriented}
  }
  \If{$|S| > 1$}{
    \For{$i=1 \dots (|S|-1)/2$}{
      $\mathcal{D}_k.{\tt append}(S[2i])$ \tcp*{Cells with even index included}
      \label{step:RL2}
    }
  }
}
}
\For{$i = k+1, \dots n$ \label{step:FL1}}{
$\mathcal{D}_i \leftarrow []$\;
\For{$c \in \mathcal{D}_{i-1}$}{
  $S \leftarrow {\tt GenerateStack}(\bm{P_i}, c)$\tcp*{$i$th lifting stage}
  \If{ $S$ = {\rm FAIL}}{
    \Return {\rm FAIL}  \tcp*{${\bf P_{i}}$ is not well oriented}
  }
  $\mathcal{D}_i.{\tt append}(S)$ \tcp*{All cells included}
  \label{step:FL2}
  }
}
\Return $\mathcal{D}_n$\;

\caption{
Algorithm to produce {\scad}s with respect to a variety of lower dimension.
}
\label{alg:V2}
\end{algorithm}

Example \ref{ex:MCAD2} demonstrates the savings offered by a variety of lower dimension, but also the difficulty in obtaining an actual V-{\scad}.

\begin{example}{\ }\\
\label{ex:MCAD2}
Consider again the polynomials from Example \ref{ex:MCAD} (with the same variable ordering) but this time with the formula $\varphi_2 := f<0 \land g=0$.  There is only one cell where this is true (the $y$-axis inside the circle).  A minimal V-{\scad} truth-invariant for $\varphi_2$ would contain only the 5 cells shown on the left of Figure \ref{fig:MCAD2}.  However, if we use Algorithm \ref{alg:V2} then we would produce 11 cells, as shown on the right.  
Algorithm \ref{alg:V2} first builds a CAD of $\mathbb{R}^1$ with respect to all the univariate projection polynomials.  This has 7 cells (the points $-1,0,1$ and the intervals in-between).  It discards the intervals and lifts over the three points with respect to $f$ obtaining the 11 cells shown.  No lifting (and hence real root isolation) was performed over the other 4 cells in $\mathbb{R}^1$ since we could conclude they were not part of the variety.
\end{example}

\begin{figure}[ht]
\begin{center}
\includegraphics[width=0.45\textwidth]{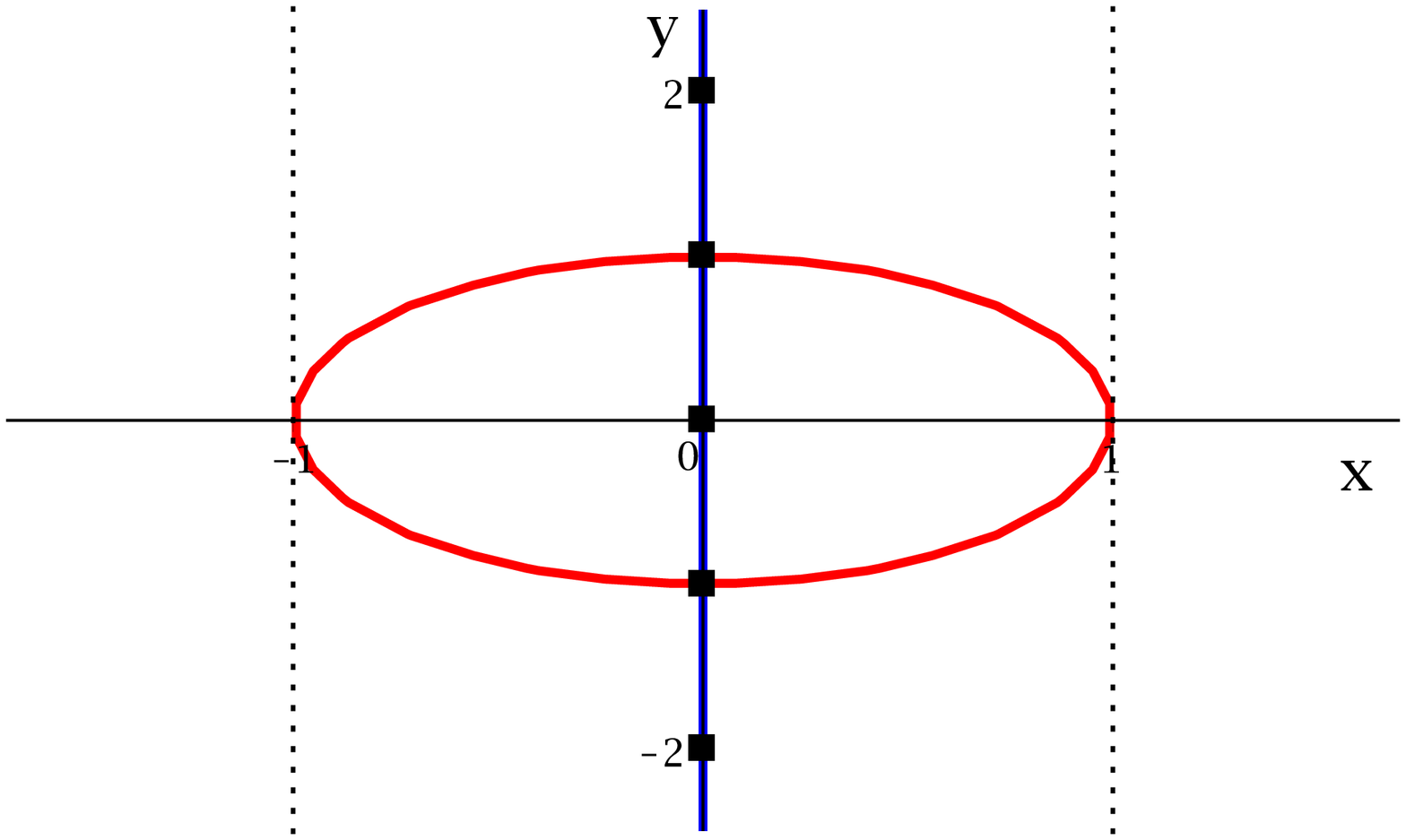}
\includegraphics[width=0.45\textwidth]{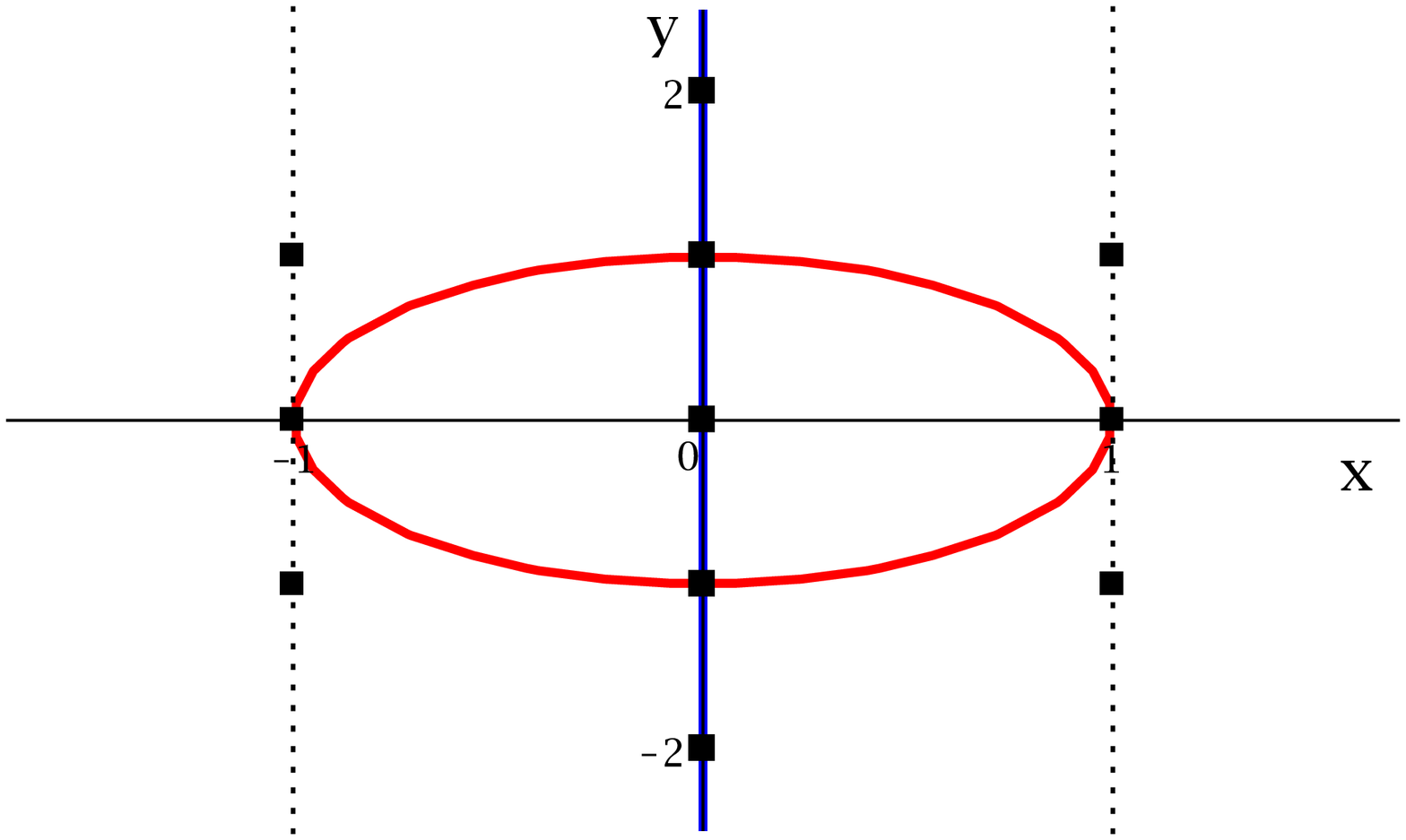}
\end{center}
\caption{Images representing the CADs described in Example \ref{ex:MCAD2}.  The minimal variety sub-CAD is on the left with the output from Algorithm \ref{alg:V2} on the right. 
} 
\label{fig:MCAD2}
\end{figure}

To allow factors of the equational constraint with different main variable would require a further extension to perform multiple stages of restricted lifting (steps $\ref{step:RL1}-\ref{step:RL2}$ of Algorithm \ref{alg:V2}) when lifting to $\mathbb{R}^i$ where $x_i$ is a main variable of a factor, and full lifting (steps $\ref{step:FL1}-\ref{step:FL2}$) otherwise.

Finally, note that we have only discussed using a single (designated) equational constraint.  
In the case of two or more (as in $f_1=0\land f_2=0\land\ldots$) we need the theory of bi-equational constraints and beyond.  Although there has been some work on this \cite{McCallum2001, BM05} we have not investigated its interaction with variety {\scads} yet.  



\subsection{Layered {\scads}}
\label{sec:LCAD}

The idea of returning CAD cells of certain dimensions was first discussed in \cite{McCallum1993} and revisited in \cite{Strzebonski00, Brown2013}. All these papers discuss the idea of returning only CAD cells of full-dimension, noting that this is sufficient to solve problems involving only strict polynomial inequalities.  
This idea was extended in the technical report  \cite{WE13} to the case of returning cells with dimension above a prescribed value. We reproduce some of that unpublished work here, including the key algorithms. 

\begin{definition}{\ }\\
\label{def:Lscad}
Define the set of cells in a CAD of a given dimension as a {\bf layer}.    
Let $\ell$ be an integer with $1 \leq \ell \leq n+1$. Then an {\bf $\bm{\ell}$-layered Sub-CAD ($\bm{\ell}$-L-sub-CAD)} is the subset of a CAD of dimension $n$ consisting of all cells of dimension $n-i$ for $0 \leq i < \ell$.
We refer to a CAD consisting of all cells of all dimensions as a {\bf complete CAD}.
\end{definition}

\begin{remark}{\ }
\begin{enumerate}
\item An $\ell$-layered CAD consists of the top $\ell$ layers of cells in a CAD.  The dimensions of these cells will depend on the dimension of the space the CAD decomposes.

\item In the literature the set of cells of full-dimensional has been referred to as an open CAD, a full CAD and a generic CAD.  We prefer layered CAD as it is less open to misinterpretation and allows us to generalise the idea beyond the top dimension.  The set of cells of full dimension is then a $1$-layered CAD.
Note also that a complete CAD of $\mathbb{R}^n$ has $(n+1)$-layers.


\item When building a $1$-layered CAD it was pointed out in \cite{Strzebonski00} that a simplified projection operator could be used.  Instead of taking the full set of coefficients for a polynomial only the leading coefficient is required (since the others are there to ensure delineability if the first vanishes, but this could only happen on a cell of less than full dimension).  In this paper we focus on improvements to the lifting phase, but if only a $1$-layered CAD is required then this further saving in the projection phase is available.
\end{enumerate}
\end{remark}

Algorithm \ref{alg:layeredCAD} describes how an $\ell$-layered CAD may be produced.  The main idea is that during the lifting process we check cell dimension before each stack construction.  If a cell has too low a dimension to give cells in $\mathbb{R}^n$ of dimension $\ell$ or higher then it is discarded.  We give a general algorithm which can be used with any suitable (and compatible) projection operator and stack generation procedure.  

\begin{algorithm}
\SetKwInOut{Input}{Input}\SetKwInOut{Output}{Output}

\Input{A formula $\varphi$, an integer $1 \leq \ell \leq n+1$ and variables $\bm{x}=x_1, \dots x_n$.  $\varphi$ is in $\bm{x}$.
}

\Output{An $\ell$-layered {\scad} for $\varphi$, or FAIL.}
\BlankLine


${\bf P} \leftarrow$ output from applying ${\tt ProjOp}$ repeatedly to $\varphi$
\tcp*{Full projection phase}
\For{$i=1,\ldots,n$}{
  Set ${\bf P}[i]$ to be the projection polynomials with mvar $x_i$\;
}

Set $\mathcal{D}[1]$ to be the CAD of $\mathbb{R}^1$ obtained by isolating the roots of  ${\bf P}[1]$\;

\For{$i=2,\ldots,n$}{
  $\mathcal{D}[i] \leftarrow [ \ ]$\;
  \For{$c \in \mathcal{D}[i-1]$}{
    ${\tt dim} \leftarrow \sum_{\alpha \in c.{\tt index}} \left(\alpha \mod 2\right)$
    \tcp*{Lift over suitable dimension cells}
    \If{${\tt dim} > i-\ell-1 $ \label{step:discard1}}{
    	$S \leftarrow {\tt GenerateStack}({\bf P}[i],c)$\;
    	\eIf{$S=${\rm FAIL}}
    		{\Return {\rm FAIL}  \tcp*{Input is not well oriented}}{ $\mathcal{D}[i].{\tt append}(S)$}
     }
  }
}

$\mathcal{D} \leftarrow [ \ ]$\;

\For{$c \in \mathcal{D}[n]$}{
    ${\tt dim} \leftarrow \sum_{\alpha \in c.{\tt index}} \left(\alpha \mod 2\right)$\;
    \If{${\tt dim} > n-\ell$ \label{step:discrad2}}{
       $\mathcal{D}.{\tt append}(c)$\tcp*{Remove cells of low dimension from final lift}
     }
}
\Return $\mathcal{D}$\;

\caption{${\tt LayeredSubCAD}(\varphi,\ell,{\bf x})$: Algorithm to produce $\ell$-layered {\scad}s.
}
\label{alg:layeredCAD}
\end{algorithm}

\begin{theorem}
\label{thm:L1}
Algorithm \ref{alg:layeredCAD} satisfies its specification.
\end{theorem}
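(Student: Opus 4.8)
The plan is to exhibit a complete CAD $\mathcal{D}$ of which the output is the top $\ell$ layers, taking for $\mathcal{D}$ the decomposition that the same projection operator and \texttt{GenerateStack} procedure would build if every cell were lifted. Because Algorithm \ref{alg:layeredCAD} reuses that very procedure and never renumbers, each cell it emits is a genuine index-consistent cell of $\mathcal{D}$, inheriting its sign-invariance (hence truth-invariance for $\varphi$); the output differs from $\mathcal{D}$ only through the discarding of low-dimensional cells before lifting in step \ref{step:discard1} and the final dimension filter in step \ref{step:discrad2}. It therefore suffices to prove that these two operations retain precisely the cells of dimension at least $n-\ell+1$, which by Definition \ref{def:Lscad} is exactly an $\ell$-layered \scad.

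The first thing I would establish is the dimension bookkeeping. Recalling that the dimension of a cell equals the number of odd entries in its index, and that lifting a cell $c$ yields sections (even new index, dimension $\dim c$) and sectors (odd new index, dimension $\dim c + 1$), each lifting stage raises dimension by at most one. Writing $\pi_{j}$ for projection onto the first $j$ coordinates, this gives the key inequality
\[
\dim c^\ast \;\le\; \dim \pi_{i-1}(c^\ast) + (n-i+1)
\]
for every cell $c^\ast$ of $\mathcal{D}$, since there are $n-i+1$ lifts from $\mathbb{R}^{i-1}$ up to $\mathbb{R}^{n}$.

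Soundness is then immediate: the final loop keeps only cells with $\dim > n-\ell$, i.e.\ dimension $n-i$ for $0\le i<\ell$, so the output lies within the top $\ell$ layers. For completeness I would fix $c^\ast\in\mathcal{D}$ with $\dim c^\ast\ge n-\ell+1$ and show no ancestor of it is discarded. The inequality gives $\dim\pi_{i-1}(c^\ast)\ge (n-\ell+1)-(n-i+1)=i-\ell$, so $\pi_{i-1}(c^\ast)$ passes the test $\dim>i-\ell-1$ of step \ref{step:discard1}; an induction on the lifting stage $i$ (with base $\mathcal{D}[1]$, the full CAD of $\mathbb{R}^{1}$) then shows every projection $\pi_i(c^\ast)$ is generated, so $c^\ast\in\mathcal{D}[n]$ and, having dimension $\ge n-\ell+1$, survives the final filter. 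Together the two inclusions give equality with the top $\ell$ layers.

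The step I expect to need the most care is the \textsc{fail} branch and the claim that the retained set really is the top of a \emph{valid} CAD. Every lift passes through \texttt{GenerateStack}, which performs the same well-orientedness test as the underlying algorithm and forces a return of \textsc{fail} on failure, an outcome the specification allows. If instead no \textsc{fail} occurs, the checks have succeeded over every lifted cell and in particular over every ancestor of every retained cell; since the invariance of a cell depends only on the delineability established over its own ancestors, each retained cell is correct and the retained set is the top $\ell$ layers of a complete CAD. The subtlety worth recording here is that, because the checks over discarded low-dimensional cells are skipped, Algorithm \ref{alg:layeredCAD} may return a valid sub-CAD in cases where full lifting would report \textsc{fail}.
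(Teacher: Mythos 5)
Your proof is correct and follows essentially the same route as the paper's: both rest on the correctness and compatibility of the sub-algorithms to guarantee the output consists of cells of a valid CAD, and on the same dimension bound (a cell of dimension $d$ in $\mathbb{R}^{i}$ can only yield descendants of dimension at most $d+n-i$ in $\mathbb{R}^{n}$) to justify the discarding tests in steps \ref{step:discard1} and \ref{step:discrad2}. Your write-up is merely more explicit than the paper's---proving both inclusions via induction on the lifting stage, and observing that skipping well-orientedness checks over discarded cells can let the algorithm succeed where full lifting would return FAIL---but these are refinements of, not departures from, the published argument.
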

\begin{proof}
The output being a subset of a valid CAD follows from the correctness and compatibility of the sub-algorithms used (as proven in \cite[etc.]{Collins1975, Hong1990, McCallum1998, McCallum1999}).  It remains to verify that the cells discarded could not contribute cells in the top $\ell$ layers of the outputted {\scad}.

When lifting over a cell of dimension $d$ in a {\scad} of $\mathbb{R}^i$ it can contribute cells in the {\scad} of $\mathbb{R}^n$ of dimension at most $d + n - i$.  If these are required for an $\ell$-layered {\scad} they must have dimension at least $n-\ell$, and so we can discard them if $d\leq i-\ell-1$ (step \ref{step:discard1}).  

When performing the final lift we must build stacks over cells of dimension $n-\ell-1$ or greater, but of course some of the cells in those stacks may not have dimension $n-\ell$.  Hence we check for this at the end (step \ref{step:discrad2}) only keeping those of the required dimension.
\end{proof}

\begin{example}{\ }
\label{ex:LCAD} \\  
Consider once again the polynomials introduced by Example \ref{ex:MCAD} (with variable ordering $x \prec y$). The first image in Figure \ref{fig:MCAD} demonstrated that the simplest sign-invariant complete CAD for the polynomials would have 23 cells.  Figure \ref{fig:LCAD} shows the same CAD, this time with the dimensions of each cell indicated. 

\begin{figure}[ht]
\begin{center}
\includegraphics[width=0.45\textwidth]{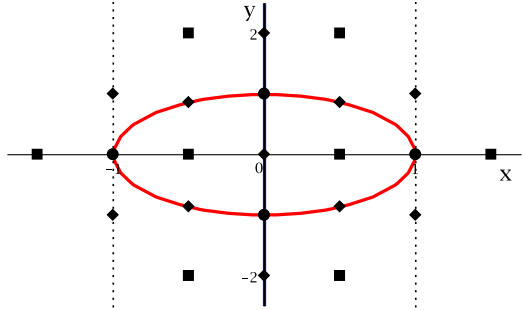}
\end{center}
\caption{Image representing the CAD described in Example \ref{ex:LCAD}.  The boxes indicate cells of dimension 2, the diamonds dimension 1 and the spheres dimension 0.}
\label{fig:LCAD}
\end{figure}

Suppose that the polynomials arise from a problem involving the formula $\varphi_3 := f<0 \land g<0$.  The solution is then given by the single cell inside the circle to the left of the $y$-axis. In this case we know the solutions must all be cells of full-dimension and thus that a 1-layered CAD would suffice.  Algorithm \ref{alg:layeredCAD} would return only the 8 cells of dimension 2 (the boxes in Figure \ref{fig:LCAD}).
\end{example}

In Example \ref{ex:LCAD} a 1-layered {\scad} is $\varphi$-sufficient.  This was a case where the problem consisted of strict inequalities meaning the solutions have full-dimension (as noted previously in \cite{McCallum1993, Strzebonski00}).  More generally there are classes of problems with known solution dimension for which the layered sub-CAD technology will be beneficial.  For example, recall the cyclic $n$-polynomials (the set of $n$ symmetric polynomials in $n$ variables).  In \cite{Backelin1989} it is shown that if there exists $m>0$ such that $m^2|n$ then there are an infinite number of roots and they are, at least, of dimension $(m-1)$.  A sub-CAD containing cells of dimension $(m-1)$ and higher would therefore be sufficient to identify the largest families of solutions, and so a layered sub-CAD could be an appropriate tool.

Layered {\scad}s will also be useful when, although a {\scad} may not be $\varphi$-sufficient, the underlying application only requires generic families of solutions.  For example, when robot motion planning we need only identify paths through cells of full-dimension, and so a 1-layered sub-CAD may be enough.  We consider a motion planning example later in Section \ref{SUBSEC:Piano}.  In that example we actually need the full-dimensional cells on a variety, which are part of the 2-layered sub-CAD.  We find that the 1-layered variety sub-CADs developed in Section \ref{subsec:LMCAD} are also appropriate. 

We now explain how Algorithm \ref{alg:layeredCAD} may be adapted to a recursive procedure.  
Our general principle when constructing layered {\scads} is to stop lifting over a cell if it cannot lead to cells of sufficient dimension in $\mathbb{R}^n$.  This occurs when the cell in question was produced as a section, rather than a sector.  Let us call these {\bf terminating sections}.
Instead of discarding these sections after they are produced, the recursive algorithm stores them in a separate output variable for use later if required.

For example, when constructing a 1-layered CAD these terminating sections will have dimension $m-1$ where $m$ is the number of variables lifted at the level the section is constructed (i.e. they are deficient by one dimension). So at the final lifting stage the terminating sections will have dimension $n-1$.  Now suppose we wish to extend to a $2$-layered CAD. If we take one of these terminating sections of any dimension and construct successive stacks over it we will obtain cells that are of dimension $n-1$ and terminating sections that are deficient (with respect to their level) by two dimensions. Combining these $n-1$-dimensional cells  with the 1-layered CAD produces a 2-layered CAD.  If we again store the remaining (and new) terminating sections then they can be used later to construct a 3-layered CAD if desired. In this manner we can recursively produce $\ell$-layered CADs, useful if the numbers of layers required is not known at the start of the computation.  The method is described by Algorithm \ref{alg:recursivelayered}. 

\begin{algorithm}
\SetKwInOut{Input}{Input}\SetKwInOut{Output}{Output}

\Input{A formula $\varphi$, variables $\bm{x} = x_1, \dots, x_n$, a list of cells of $\mathbb{R}^n$ labelled $\mathcal{LD}$, and a list of lists of cells, $\mathcal{C}$ (ordered by increasing dimension).  The formula $\varphi$ is in $\bm{x}$.  
The lists $\mathcal{C}$ and $\mathcal{LD}$ may be empty.  Otherwise the list $\mathcal{LD}$ must contain a layered {\scad} for $\varphi$ and $\mathcal{C}$ the corresponding terminating sections.}

\Output{Either FAIL or a layered {\scad} $\mathcal{D}'$ for $\varphi$.  If $\mathcal{LD}$ was empty then $\mathcal{D}$ is a 1-layered {\scad} and otherwise it is a layered {\scad} with including cells of dimension one layer lower than $\mathcal{LD}$.  Also, an updated list of lists of cells containing the terminating sections sufficient to construct the complete CAD.}
\BlankLine

{\tt global} {\bf P} \tcp*{To avoid recomputing projection polynomials}

\If{{\bf P} is undefined}{
${\bf P} \leftarrow$ output from applying ${\tt ProjOp}$ repeatedly to $\varphi$
\tcp*{Full projection phase}
  \For{$i=1,\ldots,n$}{
    Set ${\bf P}[i]$ to be the projection polynomials with mvar $x_i$\;
  }
}

$\mathcal{C}' \leftarrow [ \ ]$\;

\eIf{$\mathcal{C}==\emptyset$}{
$\mathcal{D} \leftarrow [ \ ]$ \tcp*{Base case - construct $\mathbb{R}^1$}
Set ${\tt Base}$ to be the CAD of $\mathbb{R}^1$ obtained by isolating the roots of  ${\bf P}[1]$\;
\For{$i=1,\dots,{\tt length}({\tt Base})$}{
\eIf{$(i \mod 2) == 1$}{
  $\mathcal{D}[1].{\tt append}({\tt Base}[i])$
  \tcp*{Sectors only (odd index) for the 1-layered CAD}
}{
  $\mathcal{C}'[1].{\tt append}({\tt Base}[i])$
  \tcp*{Sections (even index) added stored}
}
}
$\mathcal{D'} \leftarrow [ \ ]$\;
}{
$\mathcal{D} \leftarrow \mathcal{C}$ \tcp*{Use previously computed terminating sections}
$\mathcal{D'} \leftarrow \mathcal{C}[n]$\;
}

\For{$i=2,\ldots,n$}{
  \For{$c \in \mathcal{D}[i-1]$}{

       $S \leftarrow {\tt GenerateStack}({\bf P}[i],c)$\;
  	   \If{ $S$ = {\rm FAIL}}{
         \Return {\rm FAIL}  \tcp*{Input is not well oriented}
        }       
       \For{$j=1,\ldots,{\tt length}( S )$}{
       \eIf{$(j \mod 2) == 1$}{
       $\mathcal{D}[i].{\tt append}(S[j])$ 
       \tcp*{Sector (odd index) so add to output CAD}
       }{
       $\mathcal{C}'[i].{\tt append}(S[j])$ 
       \tcp*{Section (even index) so store}
       }

       }

  }
}

$\mathcal{D}' \leftarrow \mathcal{D}' \cup \mathcal{LD}$ \tcp*{Combine new cells with those previously computed}

\Return $[\mathcal{D}'$, $\mathcal{C'}]$\;

\caption{${\tt LayeredSubCADRecursive}(\varphi, {\bf x}, \mathcal{C}, \mathcal{LD})$: Algorithm that may be applied recursively to produce layered {\scads} with increasing numbers of layers. 
}
\label{alg:recursivelayered}
\end{algorithm}

\begin{theorem}
\label{thm:L2}
Algorithm \ref{alg:recursivelayered} satisfies its specification.
\end{theorem}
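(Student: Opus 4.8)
The plan is to argue by induction on the recursion depth, which coincides with the layer index being produced: the base call (empty $\mathcal{C}$ and $\mathcal{LD}$) must yield a $1$-layered \scad, and each subsequent call must extend an $\ell$-layered \scad into an $(\ell+1)$-layered one while correctly maintaining the auxiliary data $\mathcal{C}$. First I would dispatch the parts inherited wholesale from Theorem \ref{thm:L1}: that the collected cells form a subset of a genuine sign-/truth-invariant CAD, that {\tt GenerateStack} and {\tt ProjOp} are compatible, and that FAIL is returned exactly when the corresponding full CAD construction would fail (the only FAIL branches are the {\tt GenerateStack} calls, which match those in Algorithm \ref{alg:layeredCAD}). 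This reduces the theorem to a purely combinatorial statement about which cells are retained, for which the index/dimension dictionary (dimension $=$ number of odd entries of the index, sectors odd and sections even) is the main tool.

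For the base case I would show that keeping only the odd-indexed cells of each stack (the sectors) and passing them up the lift produces precisely the full-dimensional cells, i.e. the $1$-layered \scad of Definition \ref{def:Lscad}, while the even-indexed cells stored in $\mathcal{C}'$ are exactly the \emph{terminating sections}: cells whose index, truncated at the level where they were created, contains a single even entry, in its final position. The invariant to carry through the lifting loop is that, after processing level $i$, the list $\mathcal{D}[i]$ holds every cell of $\mathbb{R}^i$ with all-odd (length-$i$) index, and $\mathcal{C}'[i]$ holds every cell whose index has exactly one even entry, occurring at position $i$.

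For the inductive step I would assume $\mathcal{LD}$ is a correct $\ell$-layered \scad and $\mathcal{C}$ its associated terminating sections, characterised as the cells whose partial index contains exactly $\ell$ even entries. The heart of the argument is a dimension-bookkeeping lemma: a cell of $\mathbb{R}^n$ of dimension $n-\ell$ is precisely one whose full (length-$n$) index has $\ell$ even entries, and tracing its construction shows it is reached from a \emph{unique} stored terminating section, obtained by truncating at its deepest even position, followed by a chain of sector-lifts (all-odd steps) through the remaining levels. I would then verify that the shared lifting loop, fed with $\mathcal{D}=\mathcal{C}$, lifts over exactly these terminating sections, routing each new sector back into $\mathcal{D}$ (so it is lifted again at the next level) and each new section into $\mathcal{C}'$ (now deficient by $\ell+1$). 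The cells reaching level $n$ as sectors are then exactly the dimension-$(n-\ell)$ cells of the new layer; unioning them with $\mathcal{LD}$ yields the $(\ell+1)$-layered \scad, and $\mathcal{C}'$ is precisely the terminating-section list required by the next call.

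The main obstacle I anticipate is the completeness-and-uniqueness half of that dimension lemma: one must show that \emph{every} cell of the target dimension is produced, that none is produced twice, and that the ancestor terminating section through which it is generated is unique, so that neither the new layer nor the updated $\mathcal{C}'$ acquires spurious or missing cells. Establishing this requires unwinding the full chain of sector-lifts and confirming that each intermediate truncation was itself present in $\mathcal{D}$ at the appropriate level. Care is also needed in tracking the output variable into which the new layer accumulates and in confirming that the retained cells keep the indices they would carry in the complete CAD (index-consistency), since the dimension count on which the whole argument rests depends on those indices.
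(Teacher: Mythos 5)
Your proposal is correct and takes essentially the same route as the paper's own proof: correctness is inherited from the sub-algorithms (as in Theorem \ref{thm:L1}), and the layer structure is then verified by tracking sectors versus sections (odd versus even index entries) through the base case and each recursive call, with sectors accumulating into the new layer and sections becoming the stored terminating sections. The paper's proof is only a brief sketch of this argument, so your explicit induction on recursion depth, index-parity invariant, and completeness/uniqueness lemma simply supply the details it leaves implicit.
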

\begin{proof}
As with the proof of Theorem \ref{thm:L2} the correctness mostly follows from the correctness of the sub-algorithms used and it remains only to verify that the correct layers are produced.  We see that if $\mathcal{LD}$ is empty then a CAD of the real line is produced and the sections and sectors are separated.  Otherwise, each of the cells in the existing terminating sections are lifted over.  The stacks constructed consist of sectors, which together with $\mathcal{LD}$ form a layered {\scad} with one extra layer, or new terminating sections which could be lifted over to gain the complete CAD.
\end{proof}

Algorithm \ref{alg:recursivelayered} has been implemented in {\sc Maple}. It uses a global variable to avoid recalculation of projection polynomials, and {\sc Maple}'s unevaluated function call syntax (prefixing the command with ${\tt \%}$) allowing it to be repeatedly evaluated to return {\scad}s with an increasing number of layers.

\newpage

We finish this subsection by noting the following interesting property of certain layered CADs.
\begin{theorem}[\cite{WE13}]
\label{thm:twolayeredorderinvariant}{\ }\\
Let $F \subset \mathbb{Z}[x_1,\ldots,x_n]$ and $\mathcal{D}$ be a 1- or 2-layered CAD of $\mathbb{R}^n$ sign-invariant for $F$.
Then $\mathcal{D}$ is order-invariant with respect to $F$, meaning each polynomial has constant order of vanishing on each cell.
\end{theorem}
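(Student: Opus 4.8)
The plan is to prove the statement by induction on the number of variables $n$, after two reductions. First, since $\operatorname{ord}_p\!\left(\prod_i f_i^{e_i}\right)=\sum_i e_i\operatorname{ord}_p(f_i)$, order-invariance of a polynomial follows from that of each irreducible factor; as a sign-invariant CAD is built from an irreducible (square-free) basis, it suffices to treat a single irreducible $f$. Second, all the content lies in the codimension-$1$ cells: the $1$-layered case, and the top layer of the $2$-layered case, consists of cells of dimension $n$, which are open, and on an open set a sign-invariant $f$ that is not identically zero cannot vanish (a nonzero polynomial has no interior zeros); hence $f\neq0$ throughout such a cell and $\operatorname{ord}\equiv0$. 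The base case $n=1$ is immediate, a $0$-dimensional cell being a single point.

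For the inductive step I would fix a cell $c$ of dimension $n-1$, look at its projection $c'$ to $\mathbb{R}^{n-1}$, and split on the parity of the final index $i_n$. If $i_n$ is even then $c$ is a section, the analytic graph $x_n=\xi(x')$ over a \emph{full-dimensional} (open) base cell $c'$. Here I would show $f$ vanishes to constant order $1$: on the open cell $c'$ the leading coefficient of $f$ in $x_n$ is sign-invariant and cannot vanish identically, so $f$ has constant positive degree in $x_n$ there (it must involve $x_n$, else it would vanish on the open set $c'$ and be the zero polynomial); and the discriminant $\operatorname{disc}_{x_n}(f)$, being a projection polynomial, is likewise sign-invariant and nonzero on $c'$ since irreducible $f$ is square-free. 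Thus $f(x',\,\cdot\,)$ has only simple roots, $\partial_{x_n}f\neq0$ and hence $\nabla f\neq0$ along $c$, giving $\operatorname{ord}_p f=1$ for every $p\in c$.

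The remaining case, $i_n$ odd, is where I expect the real work. Then $c$ is a sector over a base cell $c'$ of dimension $n-2$, and $f\equiv0$ on the sector (open in $x_n$) forces $f$ to \emph{nullify} on $c'$, i.e.\ every coefficient of $f$ as a polynomial in $x_n$ vanishes on $c'$. The key observation is that this cannot happen non-trivially: if $f$ genuinely involved $x_n$ then a common vanishing of all its coefficients along the codimension-$1$ set $\overline{c'}$ would force those coefficients to share a non-constant factor $p(x')$, whence $p\mid f$, contradicting irreducibility. Hence $f$ does not involve $x_n$, its order at $(x',t)$ equals its order at $x'$, and order-invariance on $c$ reduces to order-invariance of $f$ on the codimension-$1$ cell $c'$ of $\mathbb{R}^{n-1}$. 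This is exactly the inductive hypothesis, and it applies precisely because the projection of a $2$-layered CAD of $\mathbb{R}^n$ lands in the top two layers of the induced CAD of $\mathbb{R}^{n-1}$; this self-containment of the induction is what fails at three layers, where base cells of codimension $2$ appear, and it explains the sharpness of the hypothesis.

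The main obstacle is therefore this nullification (sector) case, and the crux is the irreducibility-plus-codimension argument ruling out genuine $x_n$-dependent nullification on base cells of codimension at most $1$; once it is in place the induction closes cleanly. I would also record, without further comment, the standard facts relied upon: that a sign-invariant CAD renders its projection polynomials (here the leading coefficient and the discriminant, as in \cite{McCallum1998}) sign-invariant on the induced CAD of $\mathbb{R}^{n-1}$, and that sections are analytic graphs, so that the order of vanishing transports correctly between $c$ and its base cell $c'$.
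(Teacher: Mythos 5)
The paper itself contains no proof of this theorem: it is imported from the technical report \cite{WE13}, so there is no in-paper argument to compare yours against, and I assess your proposal on its own merits. On those merits it is sound. The reduction to a single irreducible basis element, the triviality of the full-dimensional cells and of the case $n=1$, and the parity split via the cell index are all correct. In the section case (even last index, open base cell $c'$) your argument is right: the leading coefficient and $\operatorname{disc}_{x_n}(f)$ are sign-invariant on $c'$ and, being nonzero polynomials (irreducibility gives squarefreeness, hence a nonzero discriminant), cannot vanish on the open set $c'$; so every root of $f(x',\cdot)$ is simple and $f$ has order exactly $1$ along the section. The sector case is indeed where the content lies, and your key lemma is correct: if every $x_n$-coefficient of an irreducible $f$ of positive $x_n$-degree vanished on the $(n-2)$-dimensional cell $c'$, then the Zariski closure of $c'$ is a real algebraic set of codimension one, its top-dimensional component has a height-one and hence principal prime ideal, and its generator divides every coefficient and therefore divides $f$, a contradiction. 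So only $x_n$-free factors can vanish identically on such a sector, the order transports down to $c'$, and the induction closes; your observation that codimension-two base cells are precisely what destroys this argument at three layers is also the correct explanation of why the theorem stops at two.

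One point you relegate to a closing remark must in fact be promoted to a hypothesis: your proof uses strictly more than the literal statement, namely (a) that each cell is sign-invariant for every irreducible factor, not merely for the elements of $F$, and (b) that the induced CAD of $\mathbb{R}^{n-1}$ is sign-invariant for the projection polynomials (leading coefficients, discriminants) and for the $x_n$-free factors. These hold for any CAD produced by the projection-and-lifting algorithms discussed in the paper, but not for an arbitrary decomposition satisfying Definition~\ref{def:SubCAD}. Indeed, without them the statement is false as written: for $F=\{x_1x_3\}$ in $\mathbb{R}^3$ there is a sign-invariant CAD with seven cells whose cells over $x_1<0$ and $x_1>0$ are split by $x_3=0$ but whose stack over $x_1=0$ is the single cell $\{x_1=0\}$; this cell has dimension $n-1$, yet $x_1x_3$ has order $1$ on it where $x_3\neq 0$ and order $2$ where $x_3=0$. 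So your "standard facts" are not optional conveniences but necessary hypotheses, and since your induction passes to the induced CAD of $\mathbb{R}^{n-1}$, they must be built into the inductive statement (the induced CAD inherits them because it is itself produced by the same projection-and-lifting process). With that adjustment your argument is a complete and correct proof.
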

Order-invariance is a stronger property than sign-invariance, but the extra knowledge it gives allows for the validated use of smaller projection operators (see for example \cite{McCallum1998}).  Hence this property allows for the avoidance of well-orientedness checks during stack generation when building 1 or 2-layered {\scads}.  This means not just a savings in computation time but the avoidance of unnecessary failure declarations that can sometimes follow from such checks (as discussed in \cite{Brown2005a, England13a}).

\subsection{Other sub-CADs in the literature}
\label{subsec:furtherscads}

We note that other ideas from the CAD literature could be described as, or be easily adapted to produce, {\scad}s.  We list these for completeness.

In \cite{McCallum1997}, whilst trying to solve a motion planning problem in the plane described by a formula $\varphi$, the author identifies a subset of cells in the decomposition of $\mathbb{R}^1$ for which any valid cell for $\varphi$ must lie over.  Lifting over only these cells only gives a $\varphi$-sufficient {\scad}.  Similar ideas are in \cite{IYAY09}.

In \cite{Brown2013} an algorithm is presented which given polynomials $F$ and a point $\alpha$, returns a single cell containing $\alpha$ on which $F$ is sign-invariant.  The cell belongs to a CAD (although not necessarily one that could be produced by any known algorithm) and hence this is an extreme example of a {\scad}. 

Partial CAD \cite{CH91} works by avoiding the splitting of cylinders into stacks when lifting over cells where the truth value is already known.  If cells over which the truth value is false were instead simply discarded then what is left would be a {\scad} sufficient to analyse the input formula.  

In \cite{SS03} an algorithm is described which takes polynomials F and returns a CAD $D$ and theory $\Theta$ (set of negated equations).  The CAD is sign-invariant for $F$ for all points which satisfy $\Theta$.  Rather than a {\scad} of $\mathbb{R}^n$ this is actually a CAD of $\mathbb{R}^n_{\Theta}$ 
: all those points in $\mathbb{R}^n$ except the set of measure zero which do not satisfy $\Theta$.

In \cite{Strzebonski12}, an algorithm is given for solving systems over cylindrical cells described by cylindrical algebraic formulae. This allows cells produced from a CAD to be used easily in further computation, which may implicitly be used to produce either {\scads} or CADs of a sub-space.

\section{Extending the use of {\scad}}
\label{sec:interaction}

Layered and variety {\scads} can offer significant savings individually but this can be increased by combining them with each other, as discussed in Section \ref{subsec:LMCAD}.  We then consider how they may interact with the recent theory of truth-table invariant CAD in Section \ref{subsec:LMTTICAD}.

\subsection{Combining layered and variety sub-CADs}
\label{subsec:LMCAD}

The idea behind {\scads} is simple: we wish to filter out only those cells of relevance to us. A variety {\scad} does this in one step during a single stage of the lifting phase, whereas a layered {\scad} stratifies the cells throughout the whole lifting process. 
There is no reason why these two ideas cannot be combined.  We discuss this in the case where all factors of the equational constraint defining the variety have main variable $x_n$ and are not nullified on a low-dimensional cell.  (It may be generalised but this would require caution as discussed in Section \ref{sec:MCAD}.)  
The key is the following simple result.
\begin{lemma}{\ }
\label{lem:dimensiononvariety} \\
Let $\mathcal{D}$ be a CAD of $\mathbb{R}^{n-1}$ for some formula $\varphi$ and let $c$ be a cell in $\mathcal{D}$ of dimension $k$. Further, suppose $f=0$ is an equational constraint of $\varphi$, each factor of $f$ has main variable $x_n$, and $f$ is not nullified on $c$. 
Then any section of the stack lifted over $c$ with respect to $f$ will have dimension $k$.
\end{lemma}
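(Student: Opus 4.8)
The plan is to show that each section of the stack over $c$ is the graph of a continuous function defined on all of $c$, so that projection onto the first $n-1$ coordinates restricts to a homeomorphism between the section and $c$; since dimension is invariant under homeomorphism, the section inherits the dimension $k$ of $c$. First I would use the two hypotheses to pin down the geometry of the stack. Because every factor of $f$ has main variable $x_n$, the polynomial $f$ is among the input polynomials from which the projection polynomials of $\mathcal{D}$ are derived, so the defining property of the projection operator (recalled in the background: it ensures that the polynomials in $r+1$ variables are delineable over each cell of the CAD of the projection polynomials in $r$ variables) guarantees that $f$ is delineable over $c$. Delineability says the real zero set of $f$ over $c$ is a union of finitely many disjoint sections, each the graph of a continuous real algebraic function $\xi_i\colon c \to \mathbb{R}$. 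The hypothesis that $f$ is not nullified on $c$ is what prevents the degenerate alternative in which $f$ vanishes identically in $x_n$ at points of $c$ (in which case part of the variety would lie in the sectors and the stack would not decompose into honest graphs); it ensures each $\xi_i$ is defined on the whole of $c$ and that these graphs are exactly the sections produced by \texttt{GenerateStack}.

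Next I would argue the dimension equality directly. Fix a section $\sigma = \{(\bm{x}', \xi_i(\bm{x}')) : \bm{x}' \in c\}$ and let $\pi\colon \mathbb{R}^n \to \mathbb{R}^{n-1}$ be projection onto the first $n-1$ coordinates. Restricted to $\sigma$, the map $\pi$ is a continuous bijection onto $c$ whose inverse $\bm{x}' \mapsto (\bm{x}', \xi_i(\bm{x}'))$ is continuous, the continuity of the inverse being precisely the continuity of $\xi_i$ supplied by delineability. Hence $\pi|_\sigma$ is a homeomorphism $\sigma \cong c$, and since the dimension of a semi-algebraic cell is a homeomorphism invariant we conclude $\dim \sigma = \dim c = k$.

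Finally I would note that this is consistent with the index-based notion of dimension used elsewhere in the paper: a section over $c$ keeps the first $n-1$ index entries of $c$ and acquires an even final index, so its number of odd indices is unchanged and equals $k$, giving an immediate cross-check. The only genuinely delicate point is the first step: one must be sure that non-nullification, together with the delineability guaranteed by the projection operator, really does make \emph{every} section a graph over \emph{all} of $c$, so that no section exists over only part of $c$ and no portion of the variety escapes into a sector. This is exactly the role of the hypotheses that each factor of $f$ has main variable $x_n$ and that $f$ is not nullified on $c$; once this is secured, the homeomorphism and the resulting dimension equality are routine.
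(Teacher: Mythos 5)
Your proof is correct. Note that there is nothing in the paper to compare it against: the lemma is introduced only as ``the following simple result'' and is used immediately without any proof being given. Your argument is the natural formalisation of what the paper leaves implicit: the non-nullification hypothesis (together with the guarantees of the projection operator) ensures that $f$, or rather its set of factors, is delineable over $c$, so each section of the stack is the graph of a continuous semi-algebraic function $\xi_i\colon c \to \mathbb{R}$ defined on all of $c$; projection onto the first $n-1$ coordinates is then a semi-algebraic homeomorphism from the section onto $c$, and dimension is preserved. Your closing cross-check is also exactly the paper's own convention: since a section inherits the first $n-1$ index entries of $c$ and appends an even entry, the index-based dimension count (number of odd indices) is unchanged, which is how the paper's data structures would record the same fact.
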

The lemma shows that lifting over a cell onto a variety gives cells with the same dimension.  Hence when lifting over an $\ell$-layered sub-CAD of $\mathbb{R}^{n-1}$ (which contains cells of dimension $n-\ell, \ldots, n-1$) we produce a {\scad} containing all the cells of dimensions $n-\ell, \ldots, n-1$ on the variety.  We may think of this as an $\ell$-layered {\scad} {\em of the variety}. 
\begin{definition}{\ }\\
\label{def:LMscad}
Let $\varphi$ be a Tarski formula with equational constraint $f=0$ which has main variable $x_n$ in all its factors, and let $1 \leq \ell \leq n$. A truth-invariant {\scad} for $\varphi$ whose cells have dimension $n-i-1$ for $0 \leq i < \ell$ and rest on the variety defined by $f=0$ is an {\bf {\lLayManScad{$\bm{\ell$}}} ({\lLMscad{$\bm{\ell}$}})}. 
\end{definition}

\begin{remark}{\ }\\
Note that in general an {\llaymanscad{$\ell$}} consists of the top $\ell$ layers of cells {\em on the variety}. This can be thought of as the intersection of an $(\ell+1)$-layered CAD of $\mathbb{R}^n$ with the variety (as the layer of $n$-dimensional cells is discarded when lifting to the variety).
\end{remark}
\noindent Lemma \ref{lem:dimensiononvariety} leads to Algorithm \ref{alg:LVSubCAD} for producing {\lLMscad{$\bm{\ell}$}}s.  


\begin{algorithm}[hb]

\SetKwInOut{Input}{Input}\SetKwInOut{Output}{Output}

\Input{A formula $\varphi$, a declared equational constraint $f=0$ from $\varphi$, an integer $1 \leq \ell \leq n+1$ and variables $\bm{x} = x_1, \dots, x_n$.  $\varphi$ is in $\bm{x}$ and all factors of $f$ have main variable $x_n$.
}
\Output{An $\ell$-layered variety {\scad} $\mathcal{D}$ for $\varphi$, or FAIL.}
\BlankLine

Extract from $\varphi$ the set of polynomials $A$ and from $f$ the subset $E \subset A$ \;

${\bf P} \leftarrow$ output from applying ${\tt ProjOp}$ to (A,E)
\tcp*{First projection stage}

$\mathcal{D}' \leftarrow {\tt LayeredSubCAD}({\bf P}, \ell)$
\tcp*{Computation of a sub-CAD of $\mathbb{R}^{n-1}$}

\If{ $\mathcal{D}'$ = {\rm FAIL}}{
\Return {\rm FAIL}  \tcp*{${\bf P}$ is not well oriented}
}

$\mathcal{D} \leftarrow []$\;

\For{$c \in \mathcal{D}'$}{
  $S \leftarrow {\tt GenerateStack}( E, c)$\tcp*{Final lifting stage}
  \If{ $S$ = {\rm FAIL}}{
  \Return {\rm FAIL}  \tcp*{Input is not well oriented}
  }
  \If{$|S| > 1$}{
    \For{$i=1 \dots (|S|-1)/2$}{
      $\mathcal{D}.{\tt append}(S[2i])$ \tcp*{Cells with even index are sections}
    }
  }
}

\Return $\mathcal{D}$\;

\caption{\texttt{LayeredVarietySubCAD}$(\varphi, f, \ell, \bm{x})$: Algorithm for $\ell$-layered variety {\scads}.}
\label{alg:LVSubCAD}
\end{algorithm}

\begin{theorem}
\label{thm:LV}
When the sub-algorithms are chosen to implement McCallum's algorithm to produce CADs with respect to an equational constraint \cite{McCallum1999}, then Algorithm \ref{alg:LVSubCAD} satisfies its specification with the outputted {\scad} consisting of cells on which the input formula has constant truth value.
\end{theorem}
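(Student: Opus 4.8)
The plan is to recognise Algorithm \ref{alg:LVSubCAD} as the composition of the variety construction of Algorithm \ref{alg:VarietySub-CAD} with the layered construction of Algorithm \ref{alg:layeredCAD}, and then to glue the correctness statements of Theorem \ref{thm:MCAD} and Theorem \ref{thm:L1} together using Lemma \ref{lem:dimensiononvariety}. The sole structural change from Algorithm \ref{alg:VarietySub-CAD} is that the full CAD of $\mathbb{R}^{n-1}$ built there by \texttt{CADAlgo} is replaced by the $\ell$-layered sub-CAD $\mathcal{D}'$ returned by \texttt{LayeredSubCAD}. I would therefore first record that, by Theorem \ref{thm:L1}, $\mathcal{D}'$ is a genuine subset of the CAD of $\mathbb{R}^{n-1}$ that McCallum's equational-constraint operator would produce, consisting of exactly the cells of dimension $n-\ell, \ldots, n-1$. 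In particular each cell of $\mathcal{D}'$ is an authentic cell of that full CAD, and so inherits the order-invariance that McCallum's operator needs of the base of the final lift.

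With this in hand I would re-run the final-lift argument of Theorem \ref{thm:MCAD} cell-by-cell over $\mathcal{D}'$. For each $c \in \mathcal{D}'$ the stack generated with respect to $E$ is valid (by order-invariance of $c$), and, provided no factor of $f$ is nullified over $c$ --- in which case \texttt{GenerateStack}, and hence Algorithm \ref{alg:LVSubCAD}, returns FAIL exactly as in Theorem \ref{thm:MCAD} --- the retained even-indexed cells are precisely the sections that capture the variety $f=0$ over $c$. Since every cell on which $\varphi$ is satisfied lies on this variety, the retained cells form a subset of a truth-invariant CAD for $\varphi$, so $\varphi$ has constant truth value on each. The two FAIL branches then account for all well-orientedness failures: one detected while building $\mathcal{D}'$ in $\mathbb{R}^{n-1}$, the other during the final lift.

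It remains to check that the output carries the correct layers, and this is where Lemma \ref{lem:dimensiononvariety} does the essential work. The lemma guarantees that lifting a cell of dimension $k$ onto the variety yields sections of dimension exactly $k$, rather than the $k$ or $k+1$ that ordinary lifting permits; the dimension is thus frozen across the final lift. Hence the cells of $\mathcal{D}'$ of dimensions $n-\ell, \ldots, n-1$ produce variety cells of exactly those dimensions, namely $n-i-1$ for $0 \leq i < \ell$, which by Definition \ref{def:LMscad} is an $\ell$-layered variety sub-CAD. I expect this dimension-freezing to be the crux: it is precisely what guarantees that discarding the low-dimensional cells of $\mathbb{R}^{n-1}$ in the layered step costs us no high-dimensional cell on the variety, since an omitted cell of $\mathcal{D}'$ could only have contributed sections below the required dimension. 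The one point to verify with care is therefore the alignment of the layer indices, i.e.\ that the $\ell$ layers of $\mathcal{D}'$ map onto exactly the $\ell$ layers demanded of the output with no off-by-one slippage.
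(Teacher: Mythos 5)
Your proof is correct and follows essentially the same route as the paper's: it invokes the variety argument of Theorem \ref{thm:MCAD} for the sub-CAD structure and truth-invariance, Theorem \ref{thm:L1} for the layered structure of $\mathcal{D}'$, and Lemma \ref{lem:dimensiononvariety} to freeze dimensions across the final lift, with nullification handled as a well-orientedness failure returning FAIL. Your explicit check of the layer-index alignment (dimensions $n-\ell,\ldots,n-1$ mapping to $n-i-1$ for $0 \leq i < \ell$) and the observation that discarded low-dimensional cells could only have contributed sections of insufficient dimension simply spell out details the paper leaves implicit.
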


\begin{proof}
As with Theorem \ref{thm:MCAD} the {\scad} structure and invariance property follow from the results in \cite{McCallum1999}.  Theorem \ref{thm:MCAD} verifies that $\mathcal{D}$ is an $\ell$-layered {\scad} of $\mathbb{R}^{n-1}$ and Lemma \ref{lem:dimensiononvariety} concludes that the lifting in the final loop results in an $\ell$-layered {\scad} of $\mathbb{R}^{n}$.  The case where Lemma \ref{lem:dimensiononvariety} does not hold is a case where the input is not-well oriented and thus FAIL is returned.
\end{proof}

We now give a simple example which uses both the layered and variety sub-CAD ideas together, as well as illustrating the difference variable ordering can make.
\begin{example}{\ }
\label{ex:MLCAD} \\
Consider a final time the polynomials from Example \ref{ex:MCAD}
and this time the formula $\varphi_1:= f=0 \land g<0$.  In Example \ref{ex:MCAD} we saw that with variable ordering $x \prec y$ a variety {\scad} could be produced with 8 cells, 3 of which described the solutions.  Instead let us build a {\llaymanscad{1}} as represented in the first image of Figure \ref{fig:MLCAD}.  The output would now be only 4 cells (those indicated with solid boxes).  Two of these describe the generic solution sets $\{x \in (-1,0), y = \pm \sqrt{1-x^2} \}$ but the third and final cell in the solution set $\{x=-1,y=0\}$ has been lost.  Note that in this case (unlike the variety {\scad}) there will be a reduction in CAD computation time as there will be no lifting over cells of dimension zero in the CAD of $\mathbb{R}^1$.  The cells which have been computed but discarded are shown by empty boxes.

\begin{figure}[ht]
\begin{center}
\includegraphics[width=0.45\textwidth]{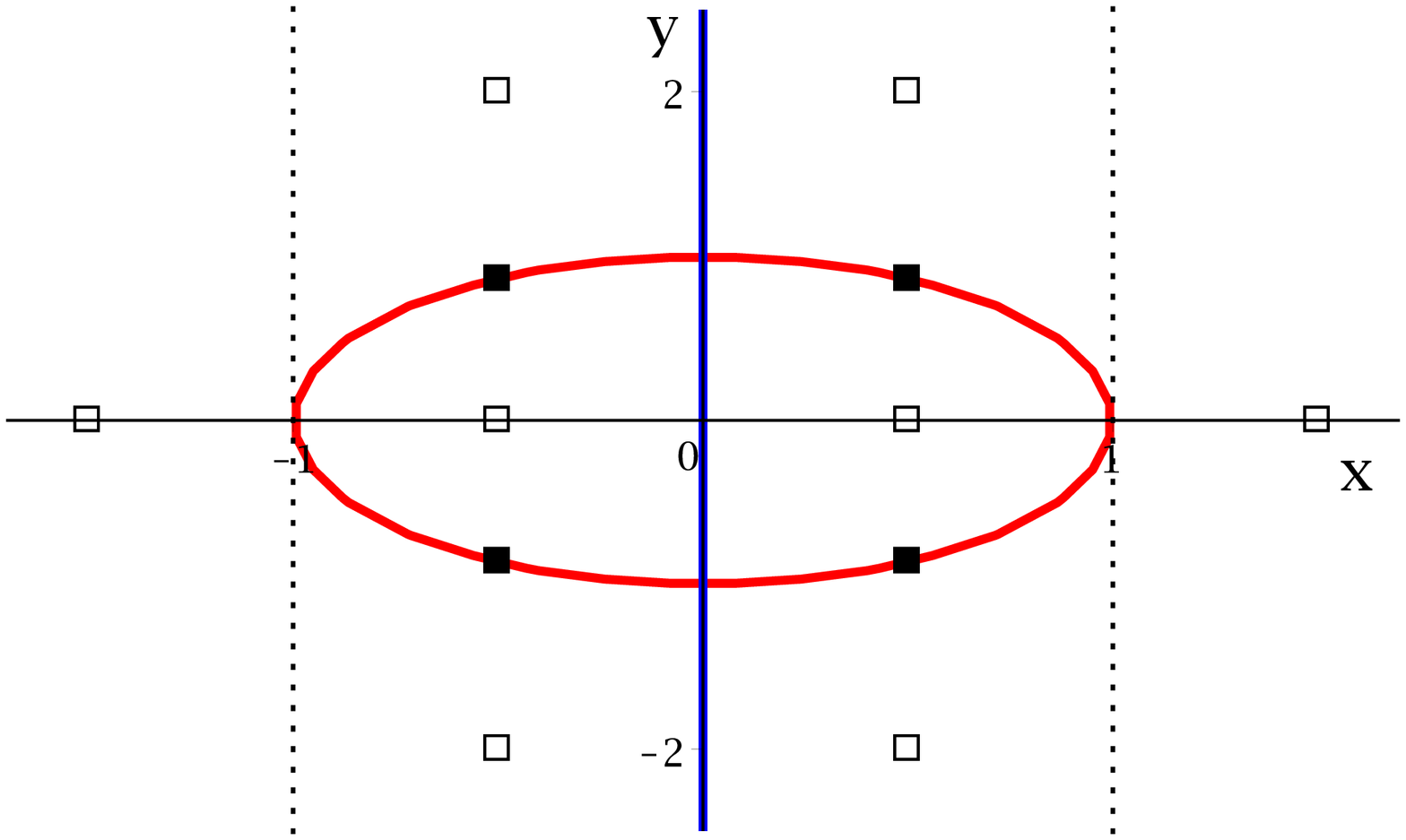} \\
\includegraphics[width=0.45\textwidth]{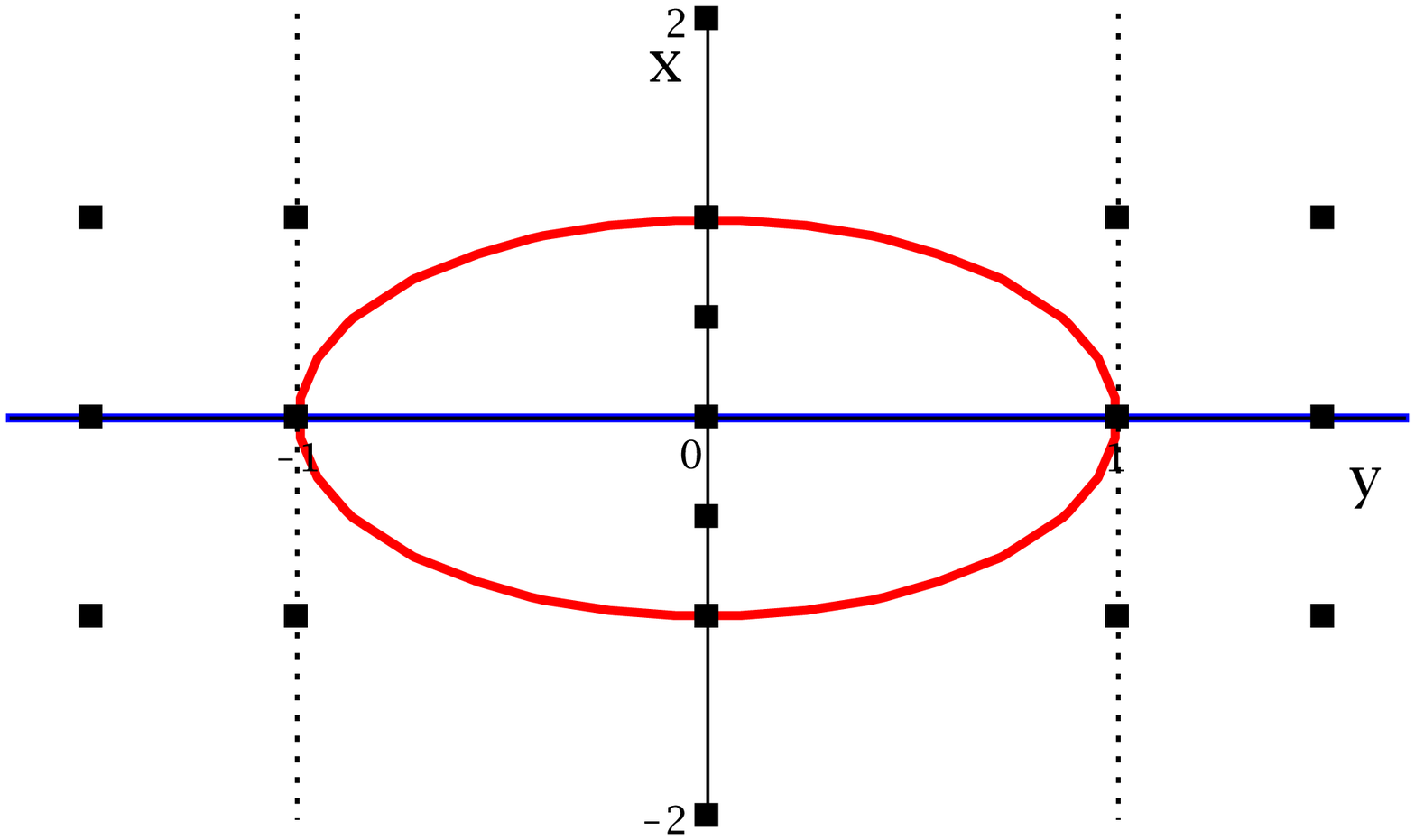}
\includegraphics[width=0.45\textwidth]{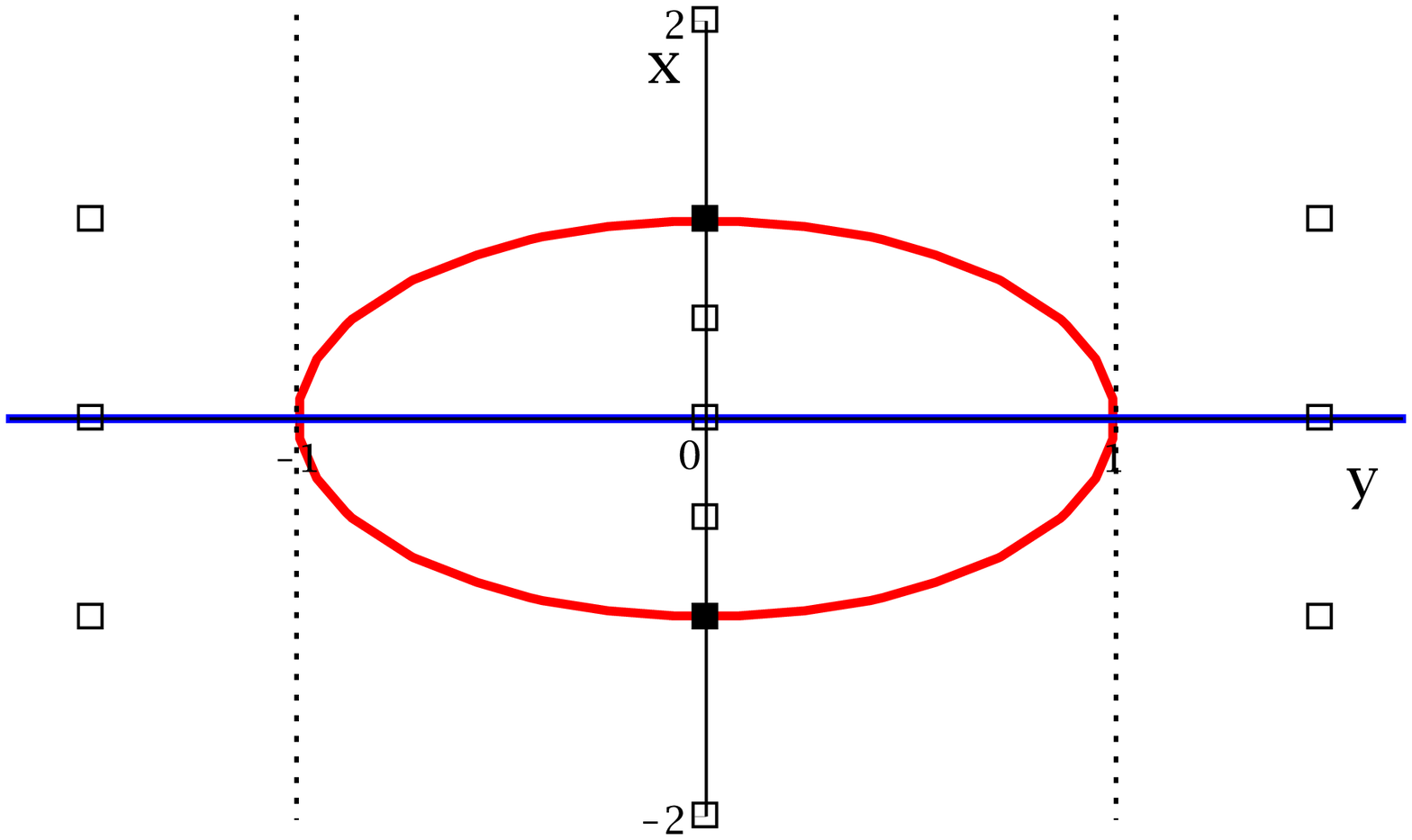}
\end{center}
\caption{Figure representing the CADs described in Example \ref{ex:MLCAD}.  The solid boxes represent cells returned and the empty boxes cells computed but discarded.  The first image uses variable ordering $x \prec y$ and the others $y \prec x$ (axes labelled accordingly).}
\label{fig:MLCAD}
\end{figure}

Consider now the alternative variable ordering $y \prec x$.  In this case a minimal complete sign-invariant CAD must have 19 cells, as represented by the second image in Figure \ref{fig:MLCAD}.  A variety {\scad} would return the 4 cells describing the circle and a {\llaymanscad{1}} only the 2 cells of these which have full-dimension on the variety.  This {\scad} is represented by the final image in Figure \ref{fig:MLCAD}.  Again, there has been a saving in CAD construction time but in this case the cells returned include the full solution set (the single cell describing the upper half of the circle).
\end{example}


\subsection{Truth table invariant sub-CADs}
\label{subsec:LMTTICAD}

The new ideas of variety and layered {\scad}s can be further adapted by combining with other CAD techniques. We shall discuss here the interaction with the idea of truth table invariant CAD (TTICAD) introduced recently by \cite{BDEMW13}. 
We recall the definition of a TTICAD.

\begin{definition}[\cite{BDEMW13}]{\ }\\
Let $\Phi := \{ \phi_i\}_{i=1}^t$ be a list of quantifier-free formulae (QFFs).
We say a cylindrical algebraic decomposition $\mathcal{D}$ is a {\bf Truth Table Invariant} CAD for $\Phi$ (a {\bf TTICAD}) if the Boolean value of each $\phi_i$ is constant (either true or false) on each cell of $\mathcal{D}$.
\end{definition}

In \cite{BDEMW13} an algorithm was given to build TTICADs in the case where each $\phi_i$ contained an explicit equational constraint. This has recently been extended to allow for any $\phi_i$ \cite{England13b}.  The algorithms involve a new projection operator which encapsulates the interaction between the equational constraints of the $\phi_i$'s whilst ignoring interactions between polynomials that have no effect on the truth value of $\phi_i$.

Truth table invariance is very useful. Given a problem defined by a parent formula $\Phi$ built by a boolean combination of $\{ \phi_i\}_{i=1}^t$, a TTICAD is both sufficient to determine where $\Phi$ is true, and more efficient than any other projection operator.
Further, there are classes of problems for which a TTICAD is exactly the desired structure, such as the problem of decomposing a complex domain according to the branch cuts of multivariate functions \cite{BD02, PBD10, EBDW13}.


In the case where there is a parent formula $\Phi$ and all the $\phi_i$ have their own equational constraint there exists a variety on which the solution rests.  It is defined by the product of the individual equational constraints (an implicit equational constraint for $\Phi$).  Hence in this case it makes sense to build a {\bf Variety Sub-TTICAD (V-sub-TTICAD)}.  
For simplicity, we assume that each equational constraint has factors which all have main variable $x_n$. 
We can then use Algorithm \ref{alg:VarietySub-CAD} on $\Phi$ with the declared equational constraint the product of the individual ones.  {\tt ProjOp} should be the TTICAD projection operator from \cite{BDEMW13} applied to the sequence of bases of polynomials appearing in the formulae and {\tt GenerateStack} an algorithm which checks for the TTICAD well-orientedness properties.  A proof of the validity of Algorithm \ref{alg:VarietySub-CAD} using these sub-procedures to give a V-sub-TTICAD would follow analogously to the proof of Theorem \ref{thm:MCAD}.
Although the variety considered is an implicit equational constraint the TTICAD projection theory is more efficient that the equational constraint theory applied to this case (as described in \cite{BDEMW14}).  

Creating a \textbf{Layered Sub-TTICAD (L-sub-TTICAD)} is also possible, by applying Algorithm \ref{alg:layeredCAD} (or \ref{alg:recursivelayered}) with the TTICAD projection operator and a {\tt GenerateStack} algorithm which checks for its well-orientedness condition.  The validity is proven by Theorem \ref{thm:L1} (using results from \cite{BDEMW13}). 
 
Similarly, we can combine TTICAD with the ideas of Section \ref{subsec:LMCAD} to produce {\laymanttiscad}s as defined below.  

\begin{definition}{\ }\\
Let $\{ \phi_i\}_{i=1}^t$ be a list of QFFs with each $\phi_i$ having equational constraint $f_i$ whose factors all have main variable $x_n$. Let $1 \leq \ell \leq n$. A {\scad} of a TTICAD for  $\{ \phi_i\}_{i=1}^t$ containing all cells of dimension $n-1-i$ for $0 \leq i < \ell$ resting on the variety defined by $\prod_{i=1}^{t} f_i = 0$ is called a {\bf {\LayManttiScad} ({\LMttiscad})}.
\end{definition}

\begin{remark}{\ }\\
As with ML-{\scads}, a {\llaymanttiscad{$\ell$}} consists of the top $\ell$-layers of cells {\em on the variety $\prod_{i=1}^{t} f_i = 0$}. Again, this can be thought of as the intersection of an $(\ell+1)$-layered CAD of $\mathbb{R}^n$ with the variety (as the layer of $n$-dimensional cells is discarded when lifting to the variety).
\end{remark}

We can construct {\LMttiscad}s by using Algorithm \ref{alg:LVSubCAD} with the sub-algorithms implementing TTICAD.  
As noted, the correctness of the approaches from this subsection follow analogously to the proofs of Theorem \ref{thm:MCAD} - \ref{thm:L2} and \ref{thm:LV} (noting that the exceptional cases where part of a variety is nullified would have meant the input was not well-oriented for TTICAD and thus triggering an output of FAIL).  The example in Section \ref{SUBSEC:LMTTICASD} demonstrates the use of a {\LMttiscad} and the benefits of choosing to do so.  See the technical report \cite{WE13} for some further details and examples.

\section{Complexity analysis}
\label{sec:Complexity}

We provide a complexity analysis of the algorithms to compute sign-invariant variety {\scad}s and {\llaymanscad{1}}s in the case where the equational constraint has all factors with main variable $x_n$.  We need to study three parts of the complexity:
\begin{description}
  \item[Projection] The complexity of the equational constraint projection set needs to be analysed. In particular the number of polynomials, the maximum degree and size of their coefficients.
  \item[Calculation of $\bm{(n-1)}$ dimension CAD] These values then can be used to estimate the complexity of the $(n-1)$--dimensional CAD.
  \item[Lifting] Finally the complexity of the lifting stage can be combined with the previous step to describe the complexity of the variety CAD.
\end{description}
We recall the previous comprehensive work on CAD complexity by Collins \cite{Collins1975} and McCallum \cite{McCallum1993}.

We first standardise some notation for a CAD with respect to a set of polynomials $A$: let $n$ be the number of variables, $m$ the number of polynomials in $A$, $d$ the maximum degree in any variable of the polynomials in $A$, and $l$ the maximum norm length of the polynomials in $A$ (where the norm length, $|f|_1$, is the sum of the absolute values of the integer coefficients of a polynomial).

Let $A_1 := A$ and let $A_{i+1} = {\rm Proj}(A_i)$. In general the projection operator used will be clear: most of the following is with respect to Collins' projection operator and, therefore, is a `worst case scenario' compared to the improved operators of McCallum \cite{McCallum1998} and Brown \cite{Brown2001a} (which are subsets of the Collins operator). Let $m_k$ be the number of polynomials in $A_k$, $d_k$ the maximum degree of $A_k$, and $l_k$ the maximum norm length.

\subsection{Collins' algorithm}

Collins \cite{Collins1975} works through the original CAD algorithm in great detail to analyse the complexity, and this methodology is followed in \cite{McCallum1993}.  We recall some key results, noting they could all be uniformly improved with ideas  from papers such as \cite{Burr2013,Davenport1985} (but that is not the aim of this paper).

In the projection stage we can bound the properties of the projection sets as follows:
\[
m_k \leq (2d)^{3^k} m^{2^{k-1}}; \quad d_k \leq \frac{1}{2} (2d)^{2^{k-1}}; \quad l_k \leq (2d)^{2^k} l.
\]
By combining these bounds Collins shows the projection phase is dominated by
\[
  (2d)^{3^{n+1}} m^{2^n} l^2.
\]
The base case and lifting algorithm requires the isolation of real roots of univariate polynomials. Collins bounds this procedure as follows. Let $A$ be a set of univariate polynomials with degree bounded by $d$ and norm length bounded by $l$. Then for a given $f \in A$ with $\hat{d} := \deg(f)$ and $\hat{l} := |f|_1$ a lower bound on the distance between two roots is given by
\begin{equation}
\label{eq:rootseplowerbound}
\frac{1}{2} \left( \sqrt{e} \hat{d}^{\frac{3}{2}} \hat{l} \right)^{-\hat{d}}.
\end{equation}
Collins uses his analysis of Heindel's algorithm for real root isolation to show that isolating the roots is dominated by
\begin{equation}
\label{eq:Heindelalgorithm}
\hat{d}^8 + \hat{d}^7 \hat{l}^3.
\end{equation}
Therefore the operations needed to isolate all roots in $A$, with $m:=|A|$, is dominated by:
\begin{equation}
\label{eq:singlerootisol} 
md^8 + md^7l^3.
\end{equation}

For a given $h$, refining a root interval to $2^{-h}$ is dominated by $d^2h^3+d^2l^2h$. Collins multiplies all polynomials in $A$ and uses \eqref{eq:rootseplowerbound} to show that all roots are separated by:
\[
\delta := \frac{1}{2} \left( \sqrt{e} (md)^{\frac{3}{2}} l^m \right)^{-md}.
\]
If we take $h$ to be $\log(\delta)$ we can refine all intervals of the polynomials in 
\[
md(d ^2 (m^2dl+md\log(md))^3 + md^3l^3) = O(m^7d^7l^3+m^2d^4l^3).
\]
Combining this with \eqref{eq:singlerootisol} tells us that the necessary refinement of intervals for all the polynomials is dominated by
\begin{equation}
md^8 + m^7d^7l^3. \label{eq:intervalrefinement}
\end{equation}
Combining \eqref{eq:intervalrefinement} with the size of the full projection set  concludes the base phase is dominated by
\[
(2d)^{3^{n+3}} m^{2^{n+2}} l^3.
\]

We also need to consider the polynomials involved in the lifting stage. This involves looking at the univariate polynomials created after substituting sample points, along with the polynomials required to define the algebraic extensions of $\mathbb{Q}$ that those sample points are contained in.

We follow \cite{Collins1975,McCallum1993} in using primitive elements to calculate the costs of operations, even though implementers are extremely unlikely to use them. 
For each sample point $\beta = (\beta_1,\ldots,\beta_k) \in \mathbb{R}^k$ there is a real algebraic number $\alpha \in \mathbb{R}$ such that $\mathbb{Q}(\beta_1,\ldots,\beta_k) = \mathbb{Q}(\alpha)$. Let $A_\alpha$ be the polynomial in $\mathbb{Q}[x]$ that, along with an isolating interval $I_\alpha$, defines $\alpha$ (so $A(\alpha) = 0$).  Let $d_k^*$ be the maximum degree of these $A_\alpha$, and $l_k^*$ the maximum norm length. Each coordinate $\beta_i$ is represented in $\mathbb{Q}(\alpha)$ by another polynomial.  Let $l_k'$ be the maximum norm length of these polynomials. Then
\[
d_k^* \leq (2d)^{2^{2n-1}}, \quad {\rm and} \quad  l_k^*, l_k' \leq (2d^2)^{2^{2n+3}} m^{2^{n+1}} l.
\]
Let $u_k$ be the number of univariate polynomials (after substitution) and $c_k$ the number of cells at level $k$. Then
\begin{equation}
u_k, c_k \leq 2^{2^n} \prod_{i=1}^n m_id_i, \quad {\rm and} \quad u_k,c_k \leq (2d)^{3^{n+1}} m^{2^n}. \label{eq:ukckCAD}
\end{equation}
Collins combines all these results to give a complexity bound for the full algorithm of
\begin{equation}\label{eq:CollinsComplexity}
(2d)^{4^{n+4}} m^{2^{n+6}} l^3.
\end{equation}

\subsection{McCallum's {\sc Cadmd} algorithm}

In \cite{McCallum1993}, McCallum gives a complexity bound, using the same methodology as \cite{Collins1975}, for his {\sc Cadmd} algorithm which produces a 1-layered CAD. Thanks to the avoidance of algebraic numbers (all sample points in a 1-layered CAD can be produced directly in $\mathbb{Q}$) the exponents are lower than in \eqref{eq:CollinsComplexity}. The complexity is dominated by
\begin{equation}\label{eq:McCallumComplexity}
(2d)^{3^{n+4}} m^{2^{n+4}} l^3.
\end{equation}

\subsection{Analysis of $P_E(A)$ and $(n-1)$-dimensional CADs}

To have an accurate complexity we must consider some properties of the projection set: the size, maximum degree, and maximum norm length. Let $E$ be the subset of $A$ containing the factors of the designated equational constraint.
Let $m_A := |A|$, $m_E := |E|$, $m_{A \setminus E} := |A \setminus E|$, and let $d_A$, $d_E$, $d_{A \setminus E}$, $l_A$, $l_E$, and $l_{A \setminus E}$ be defined similarly.

Since we are constructing a CAD with respect to equational constraints, we assume use of the projection operator, $P_E(A)$, defined by McCallum \cite{McCallum1999} as
\[
P_E(A) := P(E) \cup \{ {\rm res}_{x_n}(f,g) \mid f \in E, g \in A \setminus E \}
\]
where $P(E)$ is an application of the operator defined in \cite{McCallum1998} (giving the coefficients, discriminants and cross resultants of $E$). We will denote the resultant set, $P_E(A) \setminus P(E)$, by ${\rm ResSet_E(A)}$. 

The size of $P(E)$ is bounded by the number of coefficients ($m_Ed_E$), discriminants ($m_E$), and resultants ($\binom{m_E}{2} = \frac{m_E(m_E-1)}{2}$). Therefore we have
\begin{align}
\left|P_E(A)\right| \leq m_Ed_E + m_E + \frac{m_E(m_E-1)}{2} + m_Em_{A \setminus E} 
&= \frac{m_E}{2} \left( 2d_E + 2m_{A \setminus E} + m_E -1 \right) \nonumber \\ 
&= \frac{m_E}{2} \left( 2d_E + m_{A \setminus E} + m_A -1 \right). \label{eq:pEAsize}
\end{align}

The maximum degree of $P_E(A)$ is the greater of the maximum degrees of $P(E)$ and the resultant set. We also have a bound on the degree of a resultant with respect to $x$:
\[
\deg ( {\rm res}_x(f,g)) \leq \left( \deg_x f + \deg_x g\right) \cdot ( \max(\deg_y f, \deg_y g)).
\]
Using our overall degree bounds gives 
\begin{align}
\max \deg ({\rm ResSet}_E(A)) \leq (d_E + d_{A \setminus E}) \cdot \max(d_E,d_{A \setminus E})
&= \max(d_E^2 + d_Ed_{A \setminus E}, d_{A \setminus E}^2 + d_Ed_{A \setminus E}) \nonumber \\
&\leq \max(2d_E^2,2d_{A \setminus E}^2).\label{eq:ResSetdegreebound}
\end{align}
We also have
\begin{equation}
\max \deg (P(E)) \leq \max (d_E, 2d_E^2,2d_E^2) = 2d_E^2. \label{eq:PEdegreebound}
\end{equation}
Combining \eqref{eq:ResSetdegreebound} and \eqref{eq:PEdegreebound} gives a degree bound for $P_E(A)$:
\begin{equation}
\max \deg (P_E(A)) \leq \max(2d_E^2, 2d_{A \setminus E}^2) \leq d_A^2. \label{eq:PEAdegreebound}
\end{equation}
Finally, if we denote the maximum norm length of $P_E(A)$ by $\overline{l}$, then we know $\overline{l} \leq l_2$ (since $P_E(A) \leq {\rm Proj}(A)$) and so
\begin{equation}
\overline{l} \leq l_2 \leq (2d_A)^{2^2} l_A = 16d_A^4l_A. \label{eq:PEAnormlength}
\end{equation}

Substituting the bounds from \eqref{eq:pEAsize}, \eqref{eq:PEAdegreebound} and \eqref{eq:PEAnormlength} into \eqref{eq:CollinsComplexity} and \eqref{eq:McCallumComplexity} gives an estimate on the complexity of a complete $(n-1)$-dimensional $P_E(A)$-invariant Collins CAD dominated by
\begin{align}
&\leq \left( 2 \cdot 2 d_A^2\right)^{2^{2(n-1)+8}} \left( \frac{m_E(2d_E + m_A + m_{A\setminus E}-1)}{2} \right)^{2^{n-1+6}} \left( 16 d_A^4 l_A \right)^3 \nonumber \\
&\leq 16^3\left( 4 d_A^2\right)^{2^{2n+6}} \left( \frac{m_E(2d_E + 2m_A -1)}{2} \right)^{2^{n+5}} l_A^3 d_A^{12}. \label{eq:n-1Collins}
\end{align}
Similarly, the complexity of a 1-layered $(n-1)$-dimensional $P_E(A)$-invariant CAD is dominated by
\begin{align}
&\leq \left( 2 \cdot 2 d_A^2\right)^{3^{n-1+4}} \left( \frac{m_E(2d_E + m_A + m_{A\setminus E}-1)}{2} \right)^{2^{n-1+4}} \left( 16 d_A^4 l_A \right)^3 \nonumber \\
&\leq 16^3\left( 4 d_A^2\right)^{3^{n+3}} \left( \frac{m_E(2d_E + 2m_A -1)}{2} \right)^{2^{n+3}} l_A^3 d_A^{12}. \label{eq:n-1McCallum}
\end{align}

\subsection{Overall complexities}

So far, we have computed the complexities of the $(n-1)$-dimensional CADs. The final step is to lift over these cells with respect to the equational constraints. From \eqref{eq:ukckCAD} we can bound the number of univariate polynomials and so know from \eqref{eq:intervalrefinement} the isolations will be dominated by:
\begin{equation}
\left(2d_E\right)^{3^{n+1}} m_E^{2^n} d_E^8 + \left( \left(2d_E\right)^{3^{n+1}} m_E^{2^n}\right)^7 d_E^7 l_E^3. \label{eq:eqconstcomplexity}
\end{equation}

By combining \eqref{eq:eqconstcomplexity} with \eqref{eq:n-1Collins} and \eqref{eq:n-1McCallum} we are now in a position to describe the overall complexities of our algorithms. Note that \eqref{eq:eqconstcomplexity} will be a large overestimation for the 1-layered case.  

\begin{theorem}{\ }
\label{thm:complexityresults} \\
The complexity for computing a variety {\scad} using Collins' algorithm is dominated by:
\begin{multline}
2^{12}\left( 2^2 d_A^2\right)^{{\bf 4^{n+3}}} \left( \frac{m_E(2d_E + 2m_A -1)}{2} \right)^{{\bf 2^{n+5}}} l_A^3 d_A^{12} + \\ \left(2d_E\right)^{3^{n+1}} m_E^{2^n} d_E^8 + \left( \left(2d_E\right)^{3^{n+1}} m_E^{2^n}\right)^7 d_E^7 l_E^3. \label{eq:totalcomplexityCollins}
\end{multline}
The complexity for computing a {\llaymanscad{1}} is dominated by:
\begin{multline}
2^{12}\left( 2^2 d_A^2\right)^{{\bf 3^{n+3}}} \left( \frac{m_E(2d_E + 2m_A -1)}{2} \right)^{{\bf 2^{n+3}}} l_A^3 d_A^{12} + \\ \left(2d_E\right)^{3^{n+1}} m_E^{2^n} d_E^8 + \left( \left(2d_E\right)^{3^{n+1}} m_E^{2^n}\right)^7 d_E^7 l_E^3. \label{eq:totalcomplexityMcCallum}
\end{multline}
\end{theorem}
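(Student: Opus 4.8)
The plan is to exploit the fact that both algorithms under consideration (Algorithm \ref{alg:VarietySub-CAD} for the variety \scad{} and Algorithm \ref{alg:LVSubCAD} for the \llaymanscad{1}) split into two sequential phases whose costs simply add: first the construction of a complete CAD of $\mathbb{R}^{n-1}$ with respect to the reduced projection set $P_E(A)$, and then a single final lifting stage over the cells of that CAD, taken with respect only to the factors $E$ of the equational constraint. Because the phases run one after the other, the total cost is the sum of the two, and the whole task reduces to bounding each phase separately and then verifying that the closed forms \eqref{eq:totalcomplexityCollins} and \eqref{eq:totalcomplexityMcCallum} result after elementary rewriting.

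For the first phase I would invoke the estimates already assembled for an $(n-1)$-dimensional CAD of $P_E(A)$. These were obtained by substituting the size \eqref{eq:pEAsize}, degree \eqref{eq:PEAdegreebound} and norm-length \eqref{eq:PEAnormlength} bounds for $P_E(A)$ into the generic $n$-variable complexities of Collins \eqref{eq:CollinsComplexity} and McCallum \eqref{eq:McCallumComplexity}, with $n$ replaced by $n-1$. For the variety \scad{} this yields \eqref{eq:n-1Collins}. For the \llaymanscad{1} the required $(n-1)$-dimensional CAD is itself $1$-layered, so McCallum's {\sc Cadmd} applies and avoids algebraic sample points, yielding \eqref{eq:n-1McCallum}. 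The two bounds differ exactly in their exponents — $4^{n+3}$ versus $3^{n+3}$ in the degree base and $2^{n+5}$ versus $2^{n+3}$ in the size factor — which is the source of the differing leading exponents displayed in bold in the two halves of the theorem.

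For the second phase I would count the univariate polynomials produced when the sample points of the $(n-1)$-dimensional CAD are substituted into the factors $E$. Their total number across all cells is bounded, via \eqref{eq:ukckCAD} specialised to $E$, by $(2d_E)^{3^{n+1}} m_E^{2^n}$, and each has degree at most $d_E$ and norm length at most $l_E$. Feeding these three quantities into the root-isolation and interval-refinement bound \eqref{eq:intervalrefinement} reproduces exactly the final-lift estimate \eqref{eq:eqconstcomplexity}; this term is identical for both outputs (indeed it is a gross overestimate in the $1$-layered case, where all sample points may be taken rational). Adding \eqref{eq:n-1Collins} to \eqref{eq:eqconstcomplexity} and rewriting $16^3 = 2^{12}$, $4 d_A^2 = 2^2 d_A^2$, and $2^{2n+6} = 4^{n+3}$ then gives \eqref{eq:totalcomplexityCollins}; adding \eqref{eq:n-1McCallum} to \eqref{eq:eqconstcomplexity} likewise gives \eqref{eq:totalcomplexityMcCallum}.

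The step I expect to require the most care is the second phase: one must be sure that lifting over \emph{every} cell of the $(n-1)$-dimensional CAD is fully accounted for, that is, that the count $(2d_E)^{3^{n+1}} m_E^{2^n}$ from \eqref{eq:ukckCAD} already \emph{sums} the substituted univariate polynomials over all stacks rather than bounding them per cell, so that the lift cost enters the total additively and is not silently multiplied by the cell count $c_{n-1}$. Granting this, the remainder is bookkeeping — matching bases, constants and exponents — together with the observation that neither summand dominates the other, so that both terms must genuinely be retained in the stated bounds.
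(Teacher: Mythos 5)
Your proposal matches the paper's own derivation essentially step for step: the paper also splits the cost into the $(n-1)$-dimensional CAD for $P_E(A)$ (bounds \eqref{eq:n-1Collins} and \eqref{eq:n-1McCallum}, obtained by substituting \eqref{eq:pEAsize}, \eqref{eq:PEAdegreebound}, \eqref{eq:PEAnormlength} into \eqref{eq:CollinsComplexity} and \eqref{eq:McCallumComplexity}) plus the final lift over $E$ bounded by \eqref{eq:eqconstcomplexity} via \eqref{eq:ukckCAD} and \eqref{eq:intervalrefinement}, and then simply adds the two, noting \eqref{eq:eqconstcomplexity} overestimates in the $1$-layered case. Your point of care — that $u_k$ in \eqref{eq:ukckCAD} counts substituted univariate polynomials totalled over all cells at a level, so the lift enters additively — is correct and consistent with how the paper uses that bound.
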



In Theorem \ref{thm:complexityresults} we have emboldened the exponents to highlight the difference between \eqref{eq:totalcomplexityCollins} and \eqref{eq:totalcomplexityMcCallum}.
To help visualise the comparison we have plotted the double logarithm of the complexities against $n$ in Figure \ref{fig:doublelogarithm} (for some specific parameter values).
The diagram shows the drop in the constant in the exponents of \eqref{eq:totalcomplexityCollins} and \eqref{eq:totalcomplexityMcCallum}, whilst the scaling factor of the exponent remains the same between variety and non-variety versions of each algorithm.

\begin{figure}
\begin{center}
\includegraphics[width=0.45\textwidth]{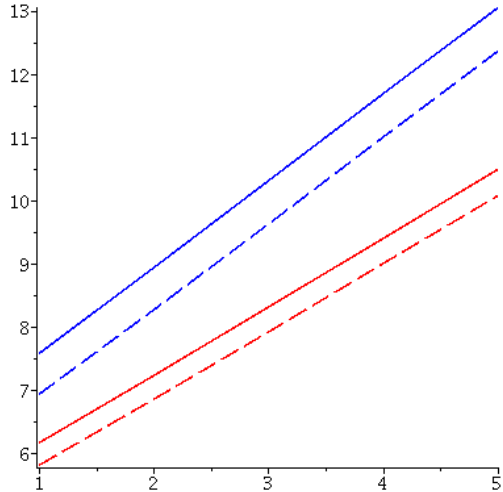}
\end{center}
\caption{A plot of $n$ (horizontal axis) against the double logarithm of the complexities of algorithms (vertical axis).  The complexities were evaluated with parameter choices $d_A = 3$, $d_E=2$, $m_A=3$, $m_E = 1$, $m_{A \setminus E}=2$, $l_A=2$, $l_E=2$. From top to bottom: CAD, variety {\scad}, 1-layered {\scad} and \llaymanscad{1}.  }
\label{fig:doublelogarithm}
\end{figure}

\section{Examples and Implementation}
\label{sec:ExImp}

We provide some case studies showing the benefit of our new algorithms, which have all been implemented in the {\sc Maple} package {\tt ProjectionCAD} \cite{England13a, England13b, WE13}.

We also compare to some competing CAD implementations: the CAD procedures in {\sc Maple}'s \texttt{RegularChains} Library, the command line program {\sc Qepcad} \cite{Brown2003b}, and the algorithm in {\sc Mathematica} \cite{Strzebonski10} (which produces a cylindrical algebraic formula).  We tested two \texttt{RegularChains} routines: the one following \cite{CMXY09} (distributed with \textsc{Maple}), and the one following \cite{CM12b} which can make use of equational constraints.  
With \textsc{Qepcad} we ran it first on with its default settings (implementing \cite{McCallum1998}) and also when an equational constraint is designated (where it follows \cite{McCallum1999}).  

\textsc{Qepcad} also has the option \texttt{measure-zero-error} which produces a CAD with only the full-dimensional cells of the free variable space guaranteed to satisfy the invariance condition (see \cite{Brown2003b}).  Although this is a CAD rather than a sub-CAD of the free variable space it is only sufficient to ensure the full dimensional solutions are correct, like a 1-layered {\scad}, meaning the solutions have an \emph{error of measure zero in the free-variable space}.  
For the three case studies below this command was not of use because the problems were unquantified and each have an equational constraint, meaning their formulae can only be satisfied on cells of less than full-dimension (in the free variable space).  So although we could produce an output from \textsc{Qepcad} using \texttt{measure-zero-error}, the only cells guaranteed to be correct are not of interest.  This contrasts with a 1-layered variety sub-CAD which does provide valid solutions.  Here the layer is with respect to the variety, not free variable space, and so the solutions provided have an \emph{error of measure zero in the solution space}.

Experiments were run on a Linux desktop (3.1GHz Intel processor, 8.0Gb total memory).  Tests for {\sc Mathematica} used V9 and for {\sc Qepcad} used \textsc{Qepcad-B} 1.69 with the options {\tt +N500000000} and {\tt +L200000}  (initialisation times included).  The tests in \textsc{Maple} used the development version (similar to {\sc Maple} 18) in command line interface, using the development version of the \texttt{RegularChains} Library\footnote{Available from \texttt{www.regularchains.org}.}.

\subsection{Example: Making use of a 1-layered variety {\scad}}
\label{SUBSEC:LMCASD}

Assume variable ordering $x \succ y \succ z$ and consider the following formula involving 3 random polynomials of degree 2 (generated using {\sc Maple}'s {\tt randpoly} function) which are plotted in Figure \ref{fig:Example1diagram}:
\begin{align*}
&\Phi := -50 x y + 56 y z + 41 z^2 + 67 x - 55 y - 21 = 0 \\
& \quad \land \quad 36 x y + 76 x z - 58 y z + 69 z^2 + 75 y + 27 > 0 \\ 
& \quad \land \quad -55 x^2  + 10 x y - 88 x + 80 y + z - 39 > 0.
\end{align*}
We wish to describe the regions of $\mathbb{R}^3$ in which $\Phi$ is satisfied. Figure \ref{fig:Example1diagram} shows there are multiple intersections between the two non-equational constraints away from the variety defined by the equational constraint. This suggests that V-{\scad} could be beneficial.  Further, if only generic solution sets are of interest then a {\llaymanscad{1}} could offer further savings.

\begin{figure}[t]
\[
\includegraphics[width=8cm]{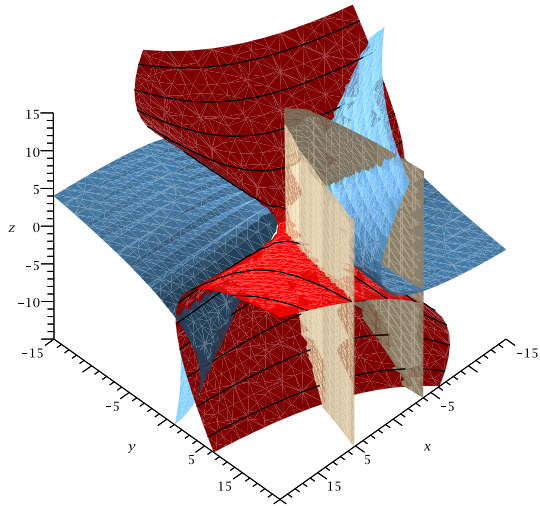}
\]
\caption{Intersection of the three surfaces from Section \ref{SUBSEC:LMCASD}.  The red surface (the darkest in black and white) is the equational constraint.}
\label{fig:Example1diagram}
\end{figure}

To show the relative benefits of the individual theories we solved the problem by constructing: a full sign-invariant CAD for the three polynomials (following \cite{McCallum1998}); a CAD invariant with respect to the equational constraint (following \cite{McCallum1999}); a variety {\scad} (following Algorithm \ref{alg:VarietySub-CAD} implementing sub-procedures from \cite{McCallum1999}) and layered variety {\scads} (following Algorithm \ref{alg:LVSubCAD} implementing sub-procedures from \cite{McCallum1999}).
\begin{description}  
	\item[Full sign-invariant CAD] 17,047 cells, 178.277 seconds.
	\item[CAD invariant with respect to EC] 1315 cells, 11.520 seconds.
	\item[Variety {\scad}] 422 cells, 10.723 seconds.
	\item[{\lLayManScad{2}}] 348 cells, 7.149 seconds.
	\item[{\lLayManScad{1}}] 138 cells, 0.475 seconds.
\end{description}
We see that making use of the equational constraint in the projection stage dramatically reduces the computation involved.  A variety {\scad} further reduces the size of the output which will lead to time savings on any future work.  
The variety {\scad} is sufficient to describe exactly where $\Phi$ is true, but if we are only concerned with the generic solution sets then a {\llaymanscad{1}} can be used to achieve further time-savings.  Note that this means an output of over 17,000 cells can be replaced with one of only 138.  If solutions of lower dimension are also needed then the {\llaymanscad{2}} (348 cells) or the complete variety {\scad} (422 cells) also offer great savings.

We now consider competing CAD implementations:
\begin{description}  
	\item[Maple: Algorithm from \cite{CMXY09}] 9841 cells, 112.460 seconds.
	\item[Maple: Algorithm from \cite{CM12b}] 559 cells, 1.999 seconds.	
    \item[Qepcad] 17,047 cells, 385.679 seconds.
    \item[Qepcad EC] 5271 cells, 26.614 seconds.
    \item[Mathematica] 0.533 seconds
\end{description}
We see that those algorithms taking advantage of the equational constraint offer smaller quicker CADs.  {\sc Qepcad} does comparatively worse here (perhaps due to the lack of improved lifting described in \cite{BDEMW14}).

If we did use \textsc{Qepcad} with its \texttt{measure-zero-error} option here it would return (in under 5 seconds) a CAD with 822 cells, but $\Phi$ would not be satisfied on any of them (as expected) and hence \textsc{Qepcad} gives an equivalent quantifier free formula \texttt{False}.   For the \textsc{Qepcad} outputs above $\Phi$ is only valid on a small fraction of cells (290/17047 and 106/5271).
The false cells that are constructed (and on which $\Phi$ is evaluated) will make a significant contribution to the computation time.
We can evaluate $\Phi$ on all the cells produced in the {\lLMscad{1}} almost instantly to find that there are 36 of 138 cells on which $\Phi$ is satisfied (the equation is by definition satisfied for all of them but the truth of the other constraints varies).
 

{\sc Mathematica} does not produce CAD cells, and so we cannot compare cell counts.  Instead it produces a cylindrical formula, from which the solutions can be derived.  This is done very quickly, (probably due to the symbolic-numeric techniques discussed in \cite{Strzebonski06}).

\bigskip

Consider a general problem of the form $ f = 0 \land \Psi(g_i)$ where $f=0$ defines a variety of {\em real\/} co-dimension 1 and $\Psi$ is a quantifier-free formula involving only $f=0$ and strict inequalities.
We expect that a {\llaymanscad{1}} will usually be sufficient to describe the generic solutions.
We may find that there are no solutions of full dimension on the variety, and can use the recursive approach for layered CAD to incrementally build extra layers as required.

\subsection{Example: Making use of 1-layered variety {\ttiscad}}
\label{SUBSEC:LMTTICASD}

We now consider a problem suitable for both our new {\scad} approaches and TTICAD.
Define the following quantifier free formulae:
\begin{align*}
&\varphi_1 :=  x^2+y^2+z^2=1 \land xy+yz+zx<1 \land x^3 - y^3 - z^3 < 0, \\
&\varphi_2 :=  (x-1)^2+(y-1)^2+(z-1)^2 = 1 \land (x-1)(y-1)+(y-1)(z-1) \nonumber \\
&\qquad \qquad \quad + (z-1)(x-1) < 1 \land (x-1)^3-(y-1)^3-(z-1)^3 <0.
\end{align*}
The surfaces defined by the polynomials in $\varphi_1$ are shown in Figure \ref{fig:Example2diagram1}, while those in $\varphi_2$ are the same but shifted.
Assume the variable ordering $x \succ y \succ z$ and consider the problem of finding all regions of $\mathbb{R}^3$ satisfying 
$\Phi := \varphi_1 \lor \varphi_2.$
It would be na\"ive to tackle this problem with a sign-invariant CAD for the 6 polynomials in $\Phi$, (the default CAD in \textsc{Maple 16} could not produce one when left overnight while {\sc Qepcad} gives a ``prime list exhausted'' error (a memory constraint) after two hours).  

Instead we should make use of the equations present in the formulae.  We can do this with {\sc Qepcad} by declaring the implicit equational constraint (the product of the two equations) to give a CAD with 6165 cells in 35,304 seconds (around 10 hours).  \textsc{Mathematica} can produce a cylindrical formula in  1.805 seconds. 
This problem is well suited for TTICAD and applying the {\tt ProjectionCAD} implementation on the two formulae $\varphi_1$ and $\varphi_2$ produces 4861 cells in 170.515 seconds.  A substantially lower cell count than \textsc{Qepcad} is achieved because the TTICAD projection set is smaller than the one using the implicit equational constraint (see \cite{BDEMW13, BDEMW14}).

\begin{figure}
\[
\includegraphics[width=7cm]{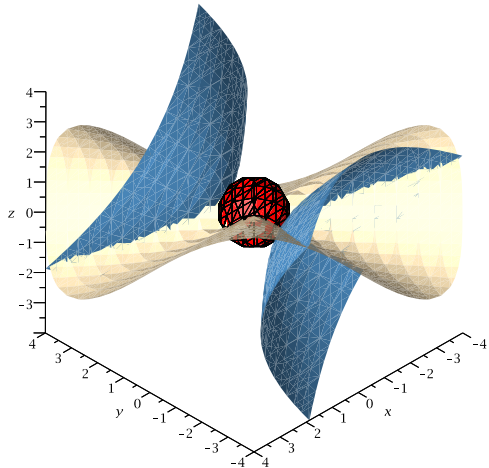}
\]
\caption{Intersection of the surfaces from $\varphi_1$ in Section \ref{SUBSEC:LMTTICASD} -- the sphere defined by the equational constraint is in red (the darkest surface in black and white). }
\label{fig:Example2diagram1}
\end{figure}

We now consider how the TTICAD can be improved upon using the new theory.  Each formula $\varphi_i$ contains an equational constraint and so the formula $\Phi$ is only true on the variety defined by their product.  Suppose further that we only want to obtain the generic solutions.  Then we may apply the TTICAD operator to $\varphi_1$ and $\varphi_2$ and construct a 1-layered {\scad} of $\mathbb{R}^2$ with respect to this projection set. This takes 0.947 seconds and produces 249 cells in $\mathbb{R}^2$. We then lift with respect to both of the equational constraints onto the variety defined by their product. This takes a further 1.191 seconds and produces 528 2-dimensional cells on the 2-dimensional variety in $\mathbb{R}^3$. 
So the 1-layered variety {\ttiscad} saves 88\% of the cells and 99\% of the computation time of the TTICAD. 

The {\llaymanttiscad{1}} obtains all the cells of full dimension (with respect to the variety) on which $\Phi$ is true, and so is sufficient up to a set of measure zero.  
If solutions of lower dimension are needed then the theory in this paper allows for a {\llaymanttiscad{2}} or the full variety {\ttiscad} which contain 1514 cells and 1976 cells, respectively.  The former takes under a minute 
and the latter just less than three to compute.
So we see that the variety {\ttiscad} takes about the same time as a TTICAD, but gives a cell saving, while the layered variety {\ttiscad}s also offer time savings.  Note that over half the cells in the complete TTICAD do not lie on either of the varieties defined by the equational constraints.

To magnify the issues consider a third formula (a different shift of the original surfaces)
\begin{align*}
\varphi_3 &:= (x+1)^2+(y+1)^2+(z+1)^2 = 1 \land (x+1)(y+1)+(y+1)(z+1) \nonumber \\ &\qquad \qquad +(z+1)(x+1) < 1 \land (x+1)^3-(y+1)^3-(z+1)^3 <0,
\end{align*}
and a new overall formula, 
$\Phi^* := \varphi_1 \lor \varphi_2 \lor \varphi_3.$


Given the above results for $\Phi$ we do not attempt to solve this problem without utilising the equational constraints.  This time to build a 1-layered variety sub-TTICAD takes 5.003 seconds and produces 1104 cells.  
The full TTICAD takes 432.210 seconds and produces 10063 cells; around 10 times as many. Although the TTICAD will contain all valid cells for $\Phi$, the {\llaymanttiscad{1}} will provide descriptions of the generic families of solutions.
If solutions of lower dimensions are needed then a {\llaymanttiscad{2}} or the complete variety {\ttiscad} would both be preferable to the full TTICAD.  The former contains 3166 cells and took 145.898 seconds, and the latter contains 4130 cells and took 429.083 seconds.  


\subsection{Example: A ``Piano Movers'' problem}
\label{SUBSEC:Piano}

An application of CAD of great interest is motion planning. Given a semi-algebraic object, an initial and desired position, and semi-algebraic obstacles, a CAD can be constructed of the valid configuration space of the object. The connectivity of this space can then be used to determine if a feasible path from the initial position of the object to the desired endpoint is possible \cite{SS83II}.

A well-studied problem in this area is the movement of a ladder through a right-angled corridor, as proposed in \cite{Davenport1986}.  The problem consists of a ladder of length 3 inside a right-angled corridor of width 1 with the aim of moving from from position 1 to position 2 in Figure \ref{fig:Piano}. 

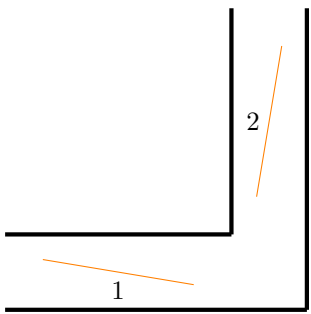
\begin{figure}[ht]
\centering
\begin{tikzpicture}
\draw[ultra thick] (0,0)--(-4,0);
\draw[ultra thick] (0,0)--(0,4);
\draw[ultra thick] (-1,1)--(-4,1);
\draw[ultra thick] (-1,1)--(-1,4);

\draw[orange] (-2/3,1.5)--(-1/3,3.5);
\draw[orange] (-1.5,1/3)--(-3.5,2/3);

\node [below] at (-2.5,0.5) {1};
\node [left] at (-0.5,2.5) {2};

\end{tikzpicture}
\caption{The piano movers problem considered in \cite{Davenport1986}}
\label{fig:Piano}
\end{figure}

Although in 2-dimensional real space, the problem lies in a 4-dimensional configuration space (as usually the problem is described using 4 variables specifying the endpoints of the ladder) making it far more difficult to describe using CAD.  The original formulation given in \cite{Davenport1986} proves particularly difficult but recently in \cite{WDEB13} a reformulation beneficial to CAD was given.  A CAD of the configuration space was built using  {\sc Qepcad} with 285,419 cells in around 5 hours.  This used the equational constraint \cite{McCallum1999} and partial CAD techniques \cite{CH91} (without these it increases to 1,691,473 cells and over 24 hours computation time).

A representation of the 2-dimensional CAD produced on route to the full 4-dimensional CAD is given in Figure \ref{fig:PianoCAD}.  {\sc Mathematica} can produce a cylindrical formula in 558.721 seconds, but for this application such a formula is not sufficient to deduce paths (since that will require knowledge of cell adjacencies and thus the boundary cells which can not always be inferred from the formula).

In the formulation the length of the ladder is an explicit equational constraint,
and so the problem is suited to treatment with a V-{\scad}. Indeed, we can see from Figure \ref{fig:PianoCAD} that there is a great deal of information computed  which is of no use in describing the suitable paths. Ideally we would restrict to a CAD or {\scad} of just the corridor highlighted.

Within this three-dimensional variety (embedded within $\mathbb{R}^4$) the important cells are those that are three-dimensional, since cells of lesser dimension correspond to physically infeasible situations (i.e. one dimensional subspaces of $\mathbb{R}^2$). Therefore a {\LMscad} would give an overview of the problem.

Constructing the 1-layered {\scad} of $\mathbb{R}^3$ produces 64,764 cells in around 124.22 seconds, and lifting to the variety takes a further 196.672 seconds, producing 101,924 cells, so offering substantial savings. 
Since this example has several constraints which do not contain all variables it is likely that partial CAD techniques (not yet implemented in our \textsc{Maple} package) would offer further savings.  
Adjacency information will have to be computed for a full solution to the motion planning problem (which may require a {\llaymanscad{2}}).

\begin{figure}
\centering
\includegraphics[scale=0.7]{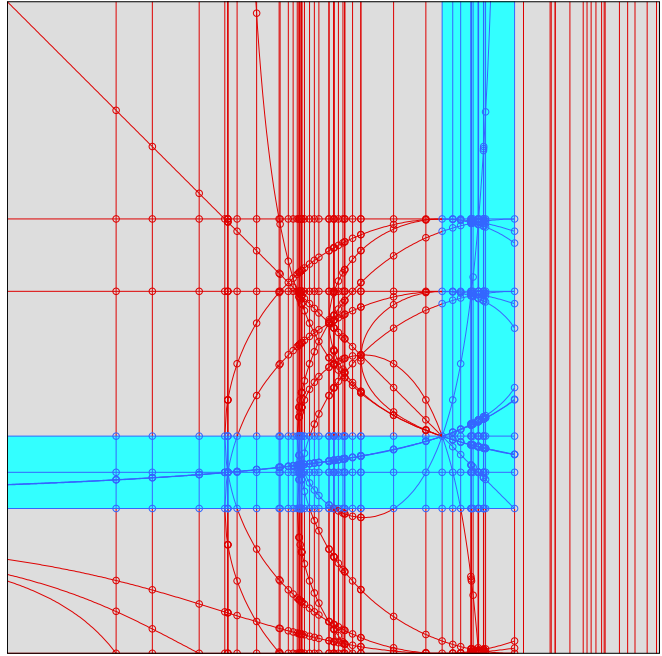}
\caption{A representation of the 2-dimensional induced CAD produced for the new formulation of the piano movers problem in \cite{WDEB13}.}
\label{fig:PianoCAD}
\end{figure}

\section{Conclusions and further work}
\label{sec:Conc}

We have formalised the idea of a cylindrical algebraic sub-decomposition. Whilst a simple idea, it can be hugely powerful and the examples presented show that massive cell reductions are possible.  In some cases time reductions are also available, and since most problems using CAD will require some further computation on the cells (such as polynomial evaluation) more time savings will follow and applications involving complicated calculation on the cells will benefit even more.  For example, the calculation of adjacency information for use in motion planning or the evaluation of multi-valued functions at (possibly algebraic) sample points for branch cut analysis. 

We provided examples of {\scads} in the literature, along with two new approaches (and algorithms to produce them): variety {\scads} and layered {\scads}. We find that their individual savings may be magnified by combining them with each other and the recent theory of truth table invariance.
The savings are large enough to tackle problems previously infeasible. 

There is great scope for future work, with some important questions as follows:
\begin{itemize}
\item Can we identify further classes of problems where the various types of {\scad} are sufficient?
\item What is the best way to build V-{\scads} for lower dimensional varieties?
\item How can we best adapt existing techniques (such as partial CAD) to output {\scads}? 
\item Can we develop heuristics (or adapt existing ones \cite{DSS04,BDEW13}) for when to use different {\scad} approaches?
\item Can we keep track of where cells arise when constructing a variety {\ttiscad} so that over each cell we lift only to the varieties for relevant ${\phi_i}$? This can be thought of as an analogue of partial CAD for TTICAD. This may alter the output significantly (it may not be a {\scad} of a CAD that can be constructed by current technology) but could allow for even smaller output for suitable problems. 
\item Can we parallelise the algorithms? The idea mentioned in \cite{McCallum1997} of lifting over sets of cells independently could be generalised.  (Preprocessing CAD problems to allow for parallelisation was discussed in \cite{MD13} but this involved only the boolean logic of the problem). 
\end{itemize}
There are also interesting questions around which properties of CADs transfer over to their {\scads}. For example, any existing adjacency algorithms require CADs to be particularly `well-behaved', and it may be possible to avoid problematic cells through {\scads}, extending the use of such algorithms. Also, well-orientedness conditions for CAD algorithms \cite{McCallum1998,Brown2001a} may be failed for the CAD but not the {\scads}. These ideas need further investigation.

An overarching aim is to develop a general CAD framework to identify when applying each technique is appropriate, automatically combining appropriate methods when possible and making choices automatically when required based on heuristic information.  This would identify, for a given problem, an efficient way to produce a $\phi$-sufficient {\scad} and thus describe the solution set.

\section*{Acknowledgements}
This work was supported by the EPSRC grant: EP/J003247/1. The authors would also like to thank Professor Gregory Sankaran for his thoughts and feedback on the topic, and Professor Scott McCallum for many stimulating conversations on TTICAD. Finally, they would like to thank the anonymous referees for helpful comments which improved the paper.

\bibliographystyle{plain}
\bibliography{CAD}

\end{document}